\mathchardef\ordinarycolon\mathcode`\:
\theoremstyle{plain}
\newtheorem{thm}{Theorem}
\newtheorem{propos}[thm]{Proposition}
\theoremstyle{definition}
\theoremstyle{remark}
\def\<{\langle}
\def\P{ {\cal P} }
\def\M{ {\cal M} }
\def\P{ {\cal P} }
\def\O{ {\cal O} }
\def\T{ {\cal T} }
\def\R{ {\cal R} }
\def\I{ \mathbb{1} }
\def\I{ \mathbbm{1} }
\def\>{\rangle}
\def\<{\langle}
\DeclareMathOperator{\Tr}{Tr}
\newcommand{\be}{\begin{equation}}
\newcommand{\ee}{\end{equation}}
\begin{document}

\title{Quantifying athermality and quantum induced deviations from classical fluctuation relations}

\author{Zo\"e Holmes}
\thanks{The first two authors contributed equally to this work.}
\affiliation{Controlled Quantum Dynamics Theory Group, Imperial College London, Prince Consort Road, London SW7 2BW, United Kingdom.}
\author{Erick Hinds Mingo}
\thanks{The first two authors contributed equally to this work.}
\affiliation{Controlled Quantum Dynamics Theory Group, Imperial College London, Prince Consort Road, London SW7 2BW, United Kingdom.}
\author{Calvin  Y.-R. Chen}
\affiliation{Controlled Quantum Dynamics Theory Group, Imperial College London, Prince Consort Road, London SW7 2BW, United Kingdom.}
\author{Florian Mintert}
\affiliation{Controlled Quantum Dynamics Theory Group, Imperial College London, Prince Consort Road, London SW7 2BW, United Kingdom.}

\begin{abstract}
In recent years a quantum information theoretic framework has emerged for incorporating non-classical phenomena into fluctuation relations. Here we elucidate this framework by exploring deviations from classical fluctuation relations resulting from the athermality of the initial  thermal system and quantum coherence of the system's energy supply. In particular we develop Crooks-like equalities for an oscillator system which is prepared either in photon added or photon subtracted thermal states and derive a Jarzynski-like equality for average work extraction. We use these equalities to discuss the extent to which adding or subtracting a photon increases the informational content of a state thereby amplifying the suppression of free energy increasing process. We go on to derive a Crooks-like equality for an energy supply that is prepared in a pure binomial state, leading to a non-trivial contribution from energy and coherence on the resultant irreversibility. We show how the binomial state equality fits in relation to a previously derived coherent state equality and offers a richer feature-set.

\end{abstract}

\maketitle

\section{Introduction}

Thermodynamics, a theory of macroscopic systems at equilibrium, is vastly successful with a diverse range of applications~\cite{Magnetization1,Superconductivity,Cosmology2,WhatIsLife,BioThermo,Chemistry}. This is perhaps somewhat surprising given the prevalence of non-equilibrium states and processes in nature. Underpinning this success is the second law of thermodynamics, an inequality that holds for all equilibrium and non-equilibrium processes alike~\cite{Callen}. Yet the implication of an irreversible flow in the dynamics belies the `arrow of time', since the underlying laws of motion generally define no preferred temporal order~\cite{Jarzynski2}. A resolution to this seeming discrepancy arose in the form of fluctuation theorems, which derive the irreversibility beginning from time-reversal invariant dynamics~\cite{Crooks,TasakiCrooks,Evans,JarzynskiOriginal,Jarzynski2}.

The challenge of generalising fluctuation relations to quantum systems has attracted significant attention in recent years. 
The simplest approach defines the work done on a closed system as the change in energy found by performing projective measurements on the system at the start and end of the non-equilibrium process~\cite{Tasaki,Kurchan,TasakiCrooks,QFT,fluctreview1,fluctreview2}. Extensions to this simple protocol have focused on formulations in terms of quantum channels~\cite{albash, Manzano, Rastegin}, generalisations to open quantum systems~\cite{openingup1,openingup2}, and alternative definitions for quantum work including those using quasi-probabilities~\cite{weakmeasurements,Allahverdyan}, the consistent histories framework~\cite{consistenthistories} and the quantum jump approach~\cite{chap14, quantumjump1,quantumjump2}.
However, these approaches tend to be limited to varying degrees by the unavoidable impact of measurements on quantum systems. By defining quantum work in terms of a pair of projective measurements or continual weak measurements, the role of coherence is attenuated. 

A new framework for deriving quantum fluctuation relations has recently emerged~\cite{aberg,alvaro,hyukjoon,CoherentCrooksChap} which aims to fully incorporate non-classical thermodynamic effects into fluctuation relations by drawing on insights from the resource theory of quantum thermodynamics~\cite{ThermodynamicTransformationsandworkextraction,2ndlaws,workextractionaberg,completestateinterconversion, constraintsbeyondfreeenergy, catalyticcoherence, catalyticworkextraction}. This framework considers an energy conserving and time reversal invariant interaction between an initially thermal system and a quantum \textit{battery}, that is the energy source which supplies work to, or absorbs work from, the system.
This framework can be taken as the starting point to derive Crooks-like relations for a harmonic oscillator battery prepared in coherent, squeezed and Schr\"odinger cat states~\cite{CoherentFluct}.
These new equalities are used both to discuss coherence induced corrections to the Crooks equality and to propose an experiment to test the framework. Furthermore the fluctuation relations give way to an interpretation involving coherent work states, a generalisation of Newtonian work for fully quantum dynamics.
It was proved that the energetic and coherent properties of the coherent work is totally captured in this fluctuation setting~\cite{erick}.

In this paper, we use this new framework to explore deviations from classical fluctuation relations resulting from athermality of the initial thermal system and quantum coherence of the battery. In particular, we start by exploring the effects of athermality by developing Crooks equalities for a quantum harmonic oscillator system which is prepared in a photon added and photon subtracted thermal state. These states have received interest in quantum optics owing to their non-Gaussian and negative Wigner functions~\cite{subphotonExp,subphotonmeanNo,BarnettAddedSubtracted} along with their producibility in lab settings~\cite{subphotonExp,PhotonAddExp,WorkPhotonSub,DemonPhotonSub}. Furthermore, they have been suggested as useful resources in quantum key distribution~\cite{QKDphotonadded}, metrology~\cite{photonaddsubMetrology} and continuous variable quantum computing~\cite{photonaddsubQC12,photonaddsubQC17} and there is growing interest in their thermodynamic properties~\cite{WorkPhotonSub,DemonPhotonSub}.

We then proceed to investigate the role of coherence by deriving a Crooks equality for a battery prepared in pure binomial states. Binomial states can be viewed as analogues of coherent states for finite dimensional systems rather than infinite dimensional oscillators~\cite{GeneralizedCoherentStates,AtomicCoherentStates}, leading to highly non-classical properties~\cite{BinomialStatesStoler,BinomQuasiProb}. While binomial states are harder to produce in lab settings, there have been proposals~\cite{PreparingSpinCS,BinomialStateProduction}.
The derived equality effectively generalises the coherent state Crooks equality of~\cite{CoherentFluct}, incorporating finite sized effects and leading to the coherent state equality in the appropriate limit. Moreover, binomial states quantify a smooth transition between semi-classical regimes and deep quantum regimes by encapsulating both coherent state and multi-qubit fluctuation relations in a single framework. 



\section{Background}

\subsection{Classical Fluctuation Relations}

A system $S$ is initially in thermal equilibrium with respect to Hamiltonian $H_S^i$ at temperature $T$. It is then driven from equilibrium by a variation of Hamiltonian $H_S^i$ to $H_S^f$, doing work $W$ with probability $\P_F(W)$ in the process. This \textit{forwards} process is compared a \textit{reverse} process in which a system thermalised with respect to $H_S^f$ is pushed out of equilibrium by changing $H_S^f$ to $H_S^i$, doing work $-W$ with probability $\P_R (-W)$. The ratio of these two probabilities is known as the Crooks equality~\cite{Crooks},
\begin{equation}\label{eq:Crooks}
    \frac{\P_F (W)}{\P_R (-W)} = \exp \left(\beta (W - \Delta F) \right) \; ,
\end{equation}
where $\Delta F$ is the equilibrium Helmholtz free energy difference and $\beta$ is the inverse temperature $1/k_B T$. 


The Crooks equality is a generalisation of the second law of thermodynamics. As a corollary to Crooks equality, one can derive the Jarzynski equality~\cite{originaljarzynski}, which reads
\begin{equation}\label{eq:Jarzynski}
    \left\langle \exp \left(-\beta W\right) \right\rangle =  \exp \left(- \beta \Delta F \right)  \; .
\end{equation}
Finally, using Jensen's inequality~\cite{JensenInequality}, one arrives at the second law of thermodynamics in its formulation as a bound for the average extractable work $\< W_{\rm ext} \> \leq - \Delta F$.
The Jarzynski equality has been used to calculate free energy changes for highly complex systems~\cite{ComplexJarzynski} such as unravelling of proteins~\cite{JarzynskiProtein},
and as a theoretical tool to re-derive two of Einstein's key relations for Brownian motion and stimulated emission~\cite{FredGittes}.


\subsection{Fully Quantum Fluctuation Relations}

Our starting point is a global ''fully quantum fluctuation theorem" from \cite{aberg}, a more general relation than that explicated in~\cite{CoherentFluct, CoherentCrooksChap,erick}, which can be used to derive a whole family of quantum fluctuation relations. A defining property of quantum systems is their ability to reside in superpositions of states belonging to different energy eigenspaces, a property often referred to simply as coherence. The quantum framework we present here carefully tracks the changes in these energetic coherences. 

Changing the Hamiltonian of a system typically requires doing work or results in the system performing work and thus every fluctuation relation, at least implicitly, involves an energy source which supplies or absorbs this work. While often not explicitly modelled, the dynamics of the energy supply can contribute non-trivially to the evolution of the driven system. Thus, to enable a more careful analysis of the energy and coherence changes of the system, we consider an \textit{inclusive}\footnote{This is in contrast to the \textit{exclusionary} picture of the original Crooks and Jarzynski equalities.} approach~\cite{deffjarzinclusive, aberg, alvaro, erick, hyukjoon,CoherentFluct} which introduces a \textit{battery} and assumes the system (S) and battery (B) evolve together under a time independent Hamiltonian $H_{SB}$.

To realise an effective change in system Hamiltonian from $H_S^i$ to $H_S^f$ with a time independent Hamiltonian, we assume a Hamiltonian of the form
\begin{equation}
    H_{SB} =\I_S \otimes H_B + H_S^i\otimes \Pi_B^i +  H_S^f\otimes \Pi_B^f 
\end{equation}
where $H_B$ is the battery Hamiltonian and $\Pi_B^i$ and $\Pi_B^f$ are projectors onto two orthogonal subspaces, $R_i$ and $R_f$, of the battery's Hilbert space. We assume the battery is initialised in a state in subspace $R_i$ only and evolves under a unitary $U$ to a final state in subspace $R_f$ only, such that the system Hamiltonian is effectively time dependent, evolving from $H_S^i$ to $H_S^f$. 

To ensure that the energy supplied to the system is provided by the battery we 
require the dynamics to be energy conserving such that $[U,H_{SB}]=0$.
We further assume that $U$ and $H_{SB}$ are time-reversal invariant with $U = \mathcal{T}(U) $ and $ H_{SB} = \T (H_{SB})$. The time-reversal~\cite{timereversalop, CrooksTimeRev} operation $\T$ is defined as the transpose operation in the energy eigenbasis of the system and battery.

The most general process that can be described by a fluctuation relation within the inclusive framework involves preparing the system and battery in an initial state $\rho$, evolving it under the propagator $U$ and then performing a measurement on the system and battery which can be represented by the measurement operator $X$. The outcome of this measurement is quantified by 
\begin{equation}\label{eq:Qquantity}
    \mathcal{Q}(X|\rho) := \Tr \left[X U \rho U^\dagger  \right] \ ,
\end{equation}
which can capture a number of different physical properties. 
For example, if the measurement operator $X$ is chosen to be an observable then $\mathcal{Q}(X|\rho)$ is the expectation value of the evolved state $U \rho U^\dagger$. Whereas, if the measurement operator is chosen to be some state $\rho'$, corresponding to the binary POVM measurement $\{ \rho', \I - \rho' \}$, then $Q(\rho'| \rho)$ captures a transition probability between the state $\rho$ and $\rho'$ under the evolution $U$. 

\begin{figure}
  \centering
{\includegraphics[width=0.92\linewidth]{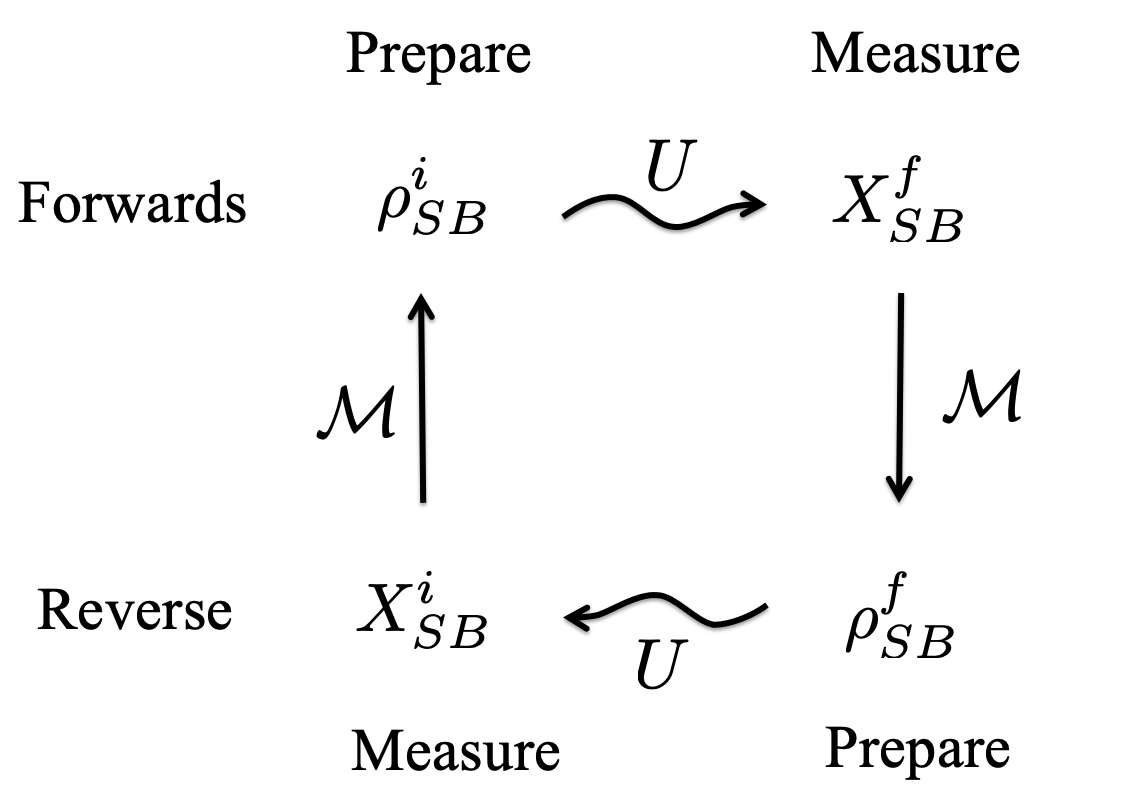}}
\caption{ \textbf{Relation between prepared states and measurements.} In the forwards (reverse) process, the state $\rho_{SB}^i = \rho_S^i \otimes \rho_B^i$ $\left(\rho_{SB}^f = \rho_S^f \otimes \rho_B^f\right)$ is prepared, it evolves under $U$ as indicated by the wiggly arrow, and then the measurement $X_{SB}^f = X_S^f \otimes X_B^f$ $\left(X_{SB}^i = X_S^i \otimes X_B^i\right)$ is performed. As indicated by the solid lines, the measurements $X_{SB}^i$ and $X_{SB}^f$ are related to the states $\rho_{SB}^i$ and $\rho_{SB}^f$ respectively by the mapping $\M$, defined in Eq.~\eqref{eq:Mapping}.}
\label{fig:Mapping}
\end{figure}

The global fluctuation relation relates $\mathcal{Q}(X_{SB}^f|\rho_{SB}^i)$ of a forwards process to $\mathcal{Q}(X_{SB}^i|\rho_{SB}^f)$ of a reverse process. For our purposes we will assume that the system and battery are initially uncorrelated in both the forwards and reverse processes, i.e. 
\begin{equation}\label{eq:UncorStates}
    \rho_{SB}^i = \rho_{S}^i \otimes \rho_B^i \ \ \ \text{and} \ \ \  \rho_{SB}^f = \rho_{S}^f \otimes \rho_B^f
\end{equation}
and suppose that independent measurements are made on the system and battery such that the measurement operator can be written in a separable form, i.e.
\begin{equation}\label{eq:UncorMeasure}
    X_{SB}^i = X_{S}^i \otimes X_B^i \ \ \ \text{and} \ \ \  X_{SB}^f = X_{S}^f \otimes X_B^f \ . 
\end{equation}
The global fluctuation relation holds for measurement operators and states related by the mapping $\M$ defined as
\begin{align}\label{eq:Mapping}
    &\rho_S^k = \M( X_S^k) \propto \T \left( \exp\left(-\frac{\beta H_S^k}{2}\right) X_S^k \exp\left(-\frac{\beta H_S^k}{2}\right) \right) \\
    &\rho_B^k = \M( X_B^k) \propto \T \left( \exp\left(-\frac{\beta H_B}{2}\right) X_B^k \exp\left(-\frac{\beta H_B}{2}\right) \right) 
\end{align}
for $k = i, f$. This mapping arises naturally when one relates a forward and a reverse quantum process in the inclusive framework. When a measurement operator is a projection onto an energy eigenstate then the state related by the mapping, Eq.~\eqref{eq:Mapping} is an energy eigenstate. Conversely, when no measurement is performed, i.e. $X = \I$, the corresponding state is a thermal state. However, in general the mapping is non trivial and essential to capture the influence of quantum coherence and athermality.

For the uncorrelated initial states and measurement operators related by the mapping $\M$ the global fluctuation relation~\cite{aberg, CoherentFluct, erick}
can be written as 
\begin{equation}\label{eq:GlobalCrooks}
    \frac{\mathcal{Q}(X_{SB}^f|\rho_{SB}^i)}{\mathcal{Q}(X_{SB}^i|\rho_{SB}^f)}
    = \exp\left(\beta (\Delta \tilde{W} -\Delta \tilde{F})\right) \, ,
\end{equation}
in terms of the quantum generalisation
\begin{align}\label{eq:GenFreeEnergy}
    \Delta \tilde{F} := \tilde{E}(\beta,H_S^f, X^f_S) - \tilde{E}(\beta,H_S^i, X^i_S)  
\end{align}
of the change in free energy,
as well as a quantum generalisation of the work
\begin{align}\label{eq:GenWork}
    \Delta \tilde{W} := \tilde{E}(\beta,H_B, X^i_B) - \tilde{E}(\beta,H_B, X^f_B)  \; .
\end{align}
supplied by the battery.
The function
\begin{equation}
    \tilde{E}(\beta,H, X) := -\frac{1}{\beta} \, \ln \left( \Tr \left[\exp\left(-\beta H \right) X \right]\right) \; 
\end{equation}
is an \textit{effective potential} that specifies the relevant energy value within the fluctuation theorem context. When the measurement operator is equal to the identity operation the effective potential, $\tilde{E}(\beta, H, \I) $, is equal to the free energy with respect to Hamiltonian $H$ and thus $\Delta \tilde{F}$ reduces to the usual Helmholtz free energy. Conversely, for a projector onto an energy eigenstate the effective potential, $\tilde{E}(\beta,H,|E_k\>\<E_k|)$, is the corresponding energy $ E_k$ from which we regain the classical work term using a two point projective measurement scheme. More generally, when restricting to projective measurement operators, the function $\beta \tilde{E}(\beta,H,|\psi\>\<\psi|) $ is a cumulant generating function in the parameter $\beta$ that captures the statistical properties of measurements of $H$ on $|\psi\>$~\cite{erick}.

We regain the Crooks equality from this global fluctuation relation for a thermal system and a battery with a well defined energy. Specifically, in the forwards process the system is prepared in a thermal state
\begin{equation}
    \gamma_S^{i}  \propto \exp\left(-\beta H_S^i\right) \, 
\end{equation}
and we consider the probability to observe the battery to have energy $E_f$ having prepared it with energy $E_i$, that is transition probabilities of the form
\begin{equation}\label{eq:TransProb}
    \begin{aligned}
    \P(E_f | \gamma_S^i, E_i) &:= \mathcal{Q}\left(\I_S \otimes |E_f\>\<E_f|\, \bigg{|} \, \gamma_S^i \otimes
    |E_i\>\<E_i|\right) \ . 
    \end{aligned}
\end{equation}
In this classical limit, the global fluctuation relation reduces to 
\begin{equation}\label{eq:ClassLimitGFR}
    \frac{\P(E_f | \gamma_S^i, E_i)}{\P(E_i | \gamma_S^f, E_f)}
    = \exp\left(\beta(W -\Delta F)\right) \,
\end{equation}
where  $W := E_i - E_f$ is the negative change in energy of the battery and thus, due to global energy conservation, equivalent to the work done on the system. If we additionally assume that the dynamics of the system and battery do not depend on the initial energy of the battery, then using this \textit{energy translation invariance} assumption which we explicitly define in Section~\ref{sec:Jarz}, one is able to regain all classical and semi-classical fluctuation results~\cite{aberg}. The global fluctuation relation is thus a genuine quantum generalisation of these relations and inherits their utility.




\medskip

In this manuscript we use the global fluctuation relation, Eq.~\eqref{eq:GlobalCrooks}, to quantify deviations from the classical Crooks relation resulting from athermality of the initial thermal system and quantum coherence of the battery. Specifically, to probe the impact of preparing the system in imperfectly thermal states, we derive in Section~\ref{sec:PhotonAddSub} a Crooks-like relation for a system that is prepared in a photon added or a photon subtracted thermal state. In section~\ref{sec:GenCoherentStates}, we investigate the deviations generated by coherence in the battery by deriving a Crooks equality for binomial  states of the battery.

\section{Results}

\subsection{Photon added and subtracted thermal states}\label{sec:PhotonAddSub}

Photon added and subtracted states are non-equilibrium states generated from a thermal state by, as the name suggests, either the addition or the subtraction of a single photon. Considering a single quantised field mode with creation and annihilation operators $a^\dagger$ and $a$ and Hamiltonian $H$, the photon added thermal state can be written as 
\begin{equation}\label{eq:add}
    \gamma_{H}^+ \propto a^\dagger \exp\left(- \beta H\right) a \; 
\end{equation}
and the photon subtracted thermal state as
\begin{equation}\label{eq:sub}
    \gamma_{H}^- \propto a \exp\left(- \beta H \right) a^\dagger \; .
\end{equation}
The states $\gamma_{H}^+$ and $\gamma_{H}^-$ are diagonal in the energy eigenbasis and therefore are classical in the sense that they are devoid of coherence. Nonetheless, they are non-Gaussian and have negative Wigner functions~\cite{photonaddedNonClas97,photonaddedNonClas06,photonaddedNonClas07,photonsubNonClas09, photonsubNonClas17}, traits which are considered non-classical in the context of quantum optics. 

Moreover, the addition or subtraction of a photon from a thermal state has a rather surprising impact on the number of photons in the state:
In particular, adding a photon to a thermal state of light, which contains on average $\bar{n}$ photons, increases the expected number of photons in the state to $2 \bar{n} + 1$~\cite{subphotonExp,subphotonmeanNo,BarnettAddedSubtracted}.
Similarly, subtracting a photon from a thermal state doubles the expected number of photons to $2 \bar{n}$. Thus, counter-intuitively, adding or subtracting a \textit{single} photon to a thermal state \textit{substantially increases} the expected number of photons in the state.

In line with standard nomenclature we will refer to \textit{photon} added and subtracted thermal states throughout this paper; however, the modes in Eq.~\eqref{eq:sub} and Eq.~\eqref{eq:add} could naturally refer to any \textit{boson}. Experimental techniques for generating photon added~\cite{PhotonAddExp} and subtracted~\cite{subphotonExp} thermal states are well established and methods are currently being developed for the preparation of \textit{phonon} added states~\cite{PhononAddedCoherent}.

To illustrate the deviations from classical thermodynamics induced by the addition (subtraction) of a single photon we derive a Crooks-like relation characterised by replacing the initially thermal system of the standard setting quantified by the Crooks equality, with a system in a photon added (subtracted) thermal state. That is, for the photon added (+) and photon subtracted (-) equalities we suppose that the system is prepared in the states
\begin{align}
     &\rho_S^i = \gamma_{i}^\pm  \ \ \ \text{and} \ \ \ \rho_S^f = \gamma_{f}^\pm
\end{align}
at the start of the forwards and reverse processes respectively, where to simplify notation we have introduced the shorthand $\gamma_{k}^\pm \equiv \gamma_{H_S^k}^\pm$. 

In analogy to the classical Crooks relation, we quantify the work supplied to the system when the photon added (subtracted) thermal system is driven by a change in Hamiltonian. For concreteness we assume here that the system is a quantum harmonic oscillator with initial and final Hamiltonians given by 
\begin{equation}
    H_S^k := \hbar  \omega_k \left( a_k^\dagger a_k + \frac{1}{2}\right)  \; ,
\end{equation}
for $k =i$ and $k= f$, such that the system is driven by a change in its frequency from $\omega_i$ to $\omega_f$. 
As energy is globally conserved, the work supplied to the system is given by the change in energy of the battery and therefore the probability distribution for the work done on the system can be quantified by transition probabilities between energy eigenstates of the battery. Specifically, in the forward process we consider the probability to observe the battery to have energy $E_f$ having prepared it with energy $E_i$ and vice versa in the reverse. We do not need to make any specific assumptions on the battery Hamiltonian $H_B$ to quantify such eigenstate transition probabilities and therefore $H_B$ may be chosen freely.

In contrast to the usual Crooks relation, the photon added (subtracted) Crooks relations depends on the average number of photons in the photonic system after the driving process. This arises from the mapping $\M$ between the measurement operators and the initial states 
following Eq.~\eqref{eq:Mapping}.
As shown explicitly in Appendix~\ref{ap:DerivationPhotonAddSub}, on inverting Eq.~\eqref{eq:Mapping} we find that for the photon added equality the measurement operators $X_S^i$ and $X_S^f$ 
are given by
\begin{equation}
     X_S^k = a_k^\dagger a_k := N_k \ \ \ \text{for} \ \ \ k=i, \, f \, ;
\end{equation}
and for the subtracted equality
they are given by
\begin{equation}
     X_S^k = a_k a_k^\dagger = N_k + 1 \ \ \ \text{for} \ \ \ k=i, \, f \; .
\end{equation}
That is, in both cases, they are given in terms of the number operator $N_k$ only.

Given this form for the measurement operators, it follows that the photon added and subtracted Crooks relations quantify the expected number of photons in the system at the end of the driving process as well as the change in energy of the system.
For example, for the forwards process of the photon added Crooks equality $\mathcal{Q}$, as defined in Eq.~\eqref{eq:Qquantity}, is equal to
\begin{equation}\label{eq:FactoredQ}
\begin{aligned}
    \mathcal{Q}\left(N \otimes |E_f\>\<E_f| \, \bigg{|} \, \gamma_i^+ \otimes |E_i\> \<E_i| \right) \\ = n(E_f | \gamma_i^+, E_i) \, \P(E_f | \gamma_i^+, E_i) \ ,
\end{aligned}
\end{equation}
where $\P$ is the transition probability of the battery from energy $E_i$ to $E_f$  conditional on preparing the system in a photon added thermal state, as defined in Eq.~\eqref{eq:TransProb}, and $ n(E_f | \gamma_i^+, E_i)$ is the average number of photons in the system at the end of this driving process. Similar expressions to Eq.~\eqref{eq:FactoredQ} are obtained for the reverse process of the photon added equality and both the forwards and reverse processes of the photon subtracted equality. 

As we are considering transition probabilities between energy eigenstates of the battery, the generalised energy flow term $\Delta \tilde{W}$ reduces to the work done on the system as in Eq.~\eqref{eq:ClassLimitGFR}. However, as derived explicitly in Appendix~\ref{ap:DerivationPhotonAddSub}, the generalisations of the free energy term, Eq.~\eqref{eq:GenFreeEnergy}, $\Delta \tilde{F}^+$ and $\Delta \tilde{F}^-$ for the photon added and subtracted equalities respectively, evaluate to
\begin{equation}
    \Delta \tilde{F}^\pm =  2 \Delta F \pm \Delta E_{\mbox{\footnotesize vac}} \, .
\end{equation}
In the above $\Delta F$ is the change in free energy associated with the change in Hamiltonian from $H_S^i$ to $H_S^f$ and we have introduced $\Delta E_{\mbox{\footnotesize vac}}$,
\begin{equation}
    \Delta E_{\mbox{\footnotesize vac}} := \frac{1}{2}\hbar \omega_f - \frac{1}{2} \hbar \omega_i \; ,
\end{equation}
as the difference between the initial and final vacuum energies of photonic system. 

\begin{figure}
  \centering
{\includegraphics[width=\linewidth]{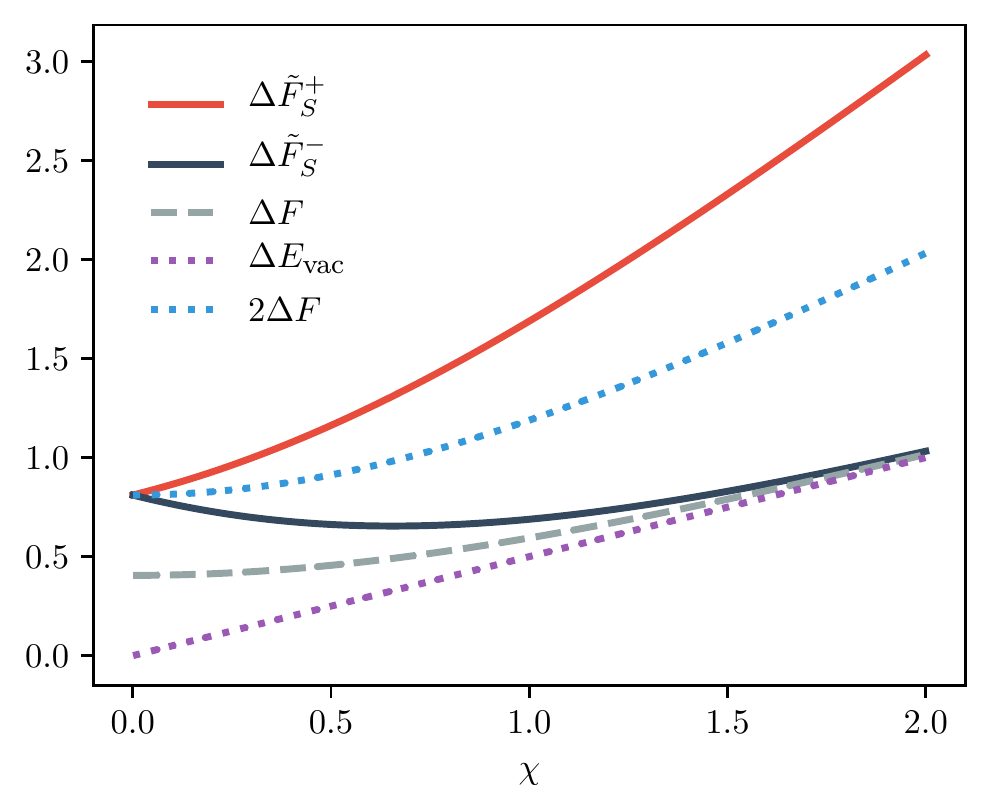}}
\caption{\textbf{Generalised Free Energies.} The solid red and dark blue lines show the generalised free energy, $\Delta F^+$ and $\Delta F^-$, of the oscillator system for the photon added and photon subtracted equalities respectively. These are plotted as a function of $\chi =  \beta \hbar \omega_i/2$, the ratio between the initial vacuum fluctuations, $\hbar \omega_i/2$, and the thermal fluctuations, $ k_B T$, a measure which quantifies the temperature and thus effectively delineates the classical and quantum regimes. The grey dashed line is the usual change in energy $\Delta F$. The dotted lines indicate the contribution of $\Delta E_{\mbox{\tiny vac}}$ (purple) and $2 \Delta F$ (light blue) to $\Delta F^+$ and $\Delta F^+$. In this plot we suppose $\hbar \omega_f = 1.5 \hbar \omega_i$ and energies are given in units of $k_B T$.}
\label{fig:PhotonAddSubPlot}
\end{figure}

In the classical limit where $\hbar$ tends to zero the contribution from the energy of the vacuum state, $\Delta E_{\mbox{\footnotesize vac}}$, vanishes and $\Delta F^+$ and $\Delta F^-$ both tend to $2 \Delta F$. This behaviour can be explained by the observation in~\cite{BarnettAddedSubtracted} that the photon probability distributions for photon added and subtracted states have the same functional form but while the photon subtracted distribution starts at $n = 0$, that is in the vacuum state, the photon added distribution starts at $n=1$, and therefore has no vacuum contribution, a shift which becomes increasingly insignificant for higher temperatures. Conversely, as shown in Fig.~\ref{fig:PhotonAddSubPlot}, in the low temperature quantum limit the contribution of the energy of the vacuum state generates sizeable deviations between the generalised free energy terms for the photon added and subtracted cases. Specifically, while $\Delta F_S^-$ tends to $\Delta F$ in agreement with the standard classical Crooks relation, we find that $\Delta F^+$ is substantially larger than $2 \Delta F$. This is due to the fact that in the low temperature limit the photon subtracted thermal state and normal thermal state both tend to the vacuum state, whereas the photon added thermal state tends to a single photon Fock state. In all limits $\Delta F^+$ and $\Delta F^-$ are larger than $\Delta F$ indicating that the addition and subtraction of a photon increases the energy and information content of a thermal state thereby increasing the extractable work from the state. Similar phenomena have been observed elsewhere in the context of work extraction protocols~\cite{WorkPhotonSub} and Maxwell demons~\cite{DemonPhotonSub}.

The final photon added (+) and photon subtracted (-) Crooks equality can be written as
\begin{equation}\label{eq:addsubCrooks}
    \frac{\P(E_f | \gamma_i^\pm, E_i)}{\P(E_i | \gamma_f^\pm, E_f)} = \R_\pm(W) \exp\left(\beta\left(W -  2\Delta F \mp \Delta E_{\mbox{\footnotesize vac}}\right)\right) \; .
\end{equation}
The prefactor $\R_\pm(W)$ quantifies the ratio of the number of photons measured in the system at the end of the reverse process over the number of photons measured at the end of the forwards process\footnote{As a result, the prefactor is only defined when both the numerator and denominator of Eq.~\eqref{eq:Prefactors} are both positive quantities.}. As shown in Appendix~\ref{ap:DerivationPhotonAddSub}, the prefactors $\R_+(W)$ and $\R_-(W)$ can be written as 
\begin{equation}
    \begin{aligned}\label{eq:Prefactors}
        &\R_\pm(W) = \frac{\omega_f}{\omega_i} \frac{\hbar \omega_f(2 \bar{n}_f + k_\pm) +  W + \Delta E_{\mbox{\tiny vac}}}{\hbar \omega_i \left(2 \bar{n}_i + k_\pm^{-1}\right) - W - \Delta E_{\mbox{\tiny vac}}} 
\end{aligned}
\end{equation}
with $k_+ = 1$ and $k_- = \frac{\omega_i}{\omega_f}$ and where $\bar{n}_k$ is the average number of photons in a thermal state with frequency $\omega_k$. It is worth noting that $\R_\pm(W)$ implicitly depends on the free energy of the initial and final Hamiltonians because $\hbar \omega_k (\bar{n}_k + \frac{1}{2})$ is the average energy of a thermal photonic state with frequency $\omega_k$, which by definition, is equal to the sum of free energy and entropy of the state. 

\begin{figure}
\centering
{\includegraphics[width=\linewidth]{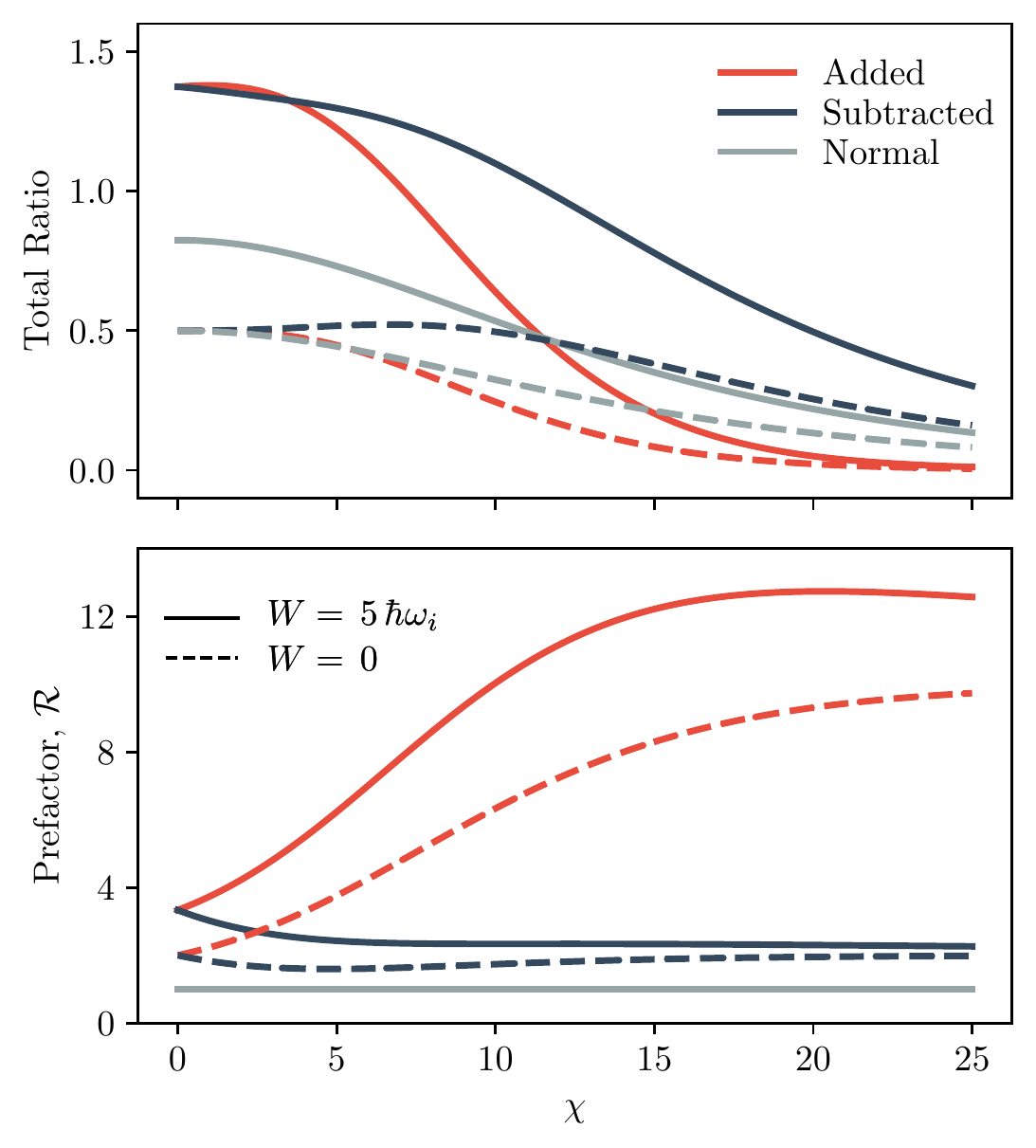}}
\caption{\textbf{Predicted ratio and $\R$ prefactor.} The upper figure plots the predicted ratio of the forwards and reverse transition probabilities, i.e. the right hand side of Eq.~\eqref{eq:addsubCrooks}, for the photon added (subtracted) Crooks equality as a function of $\chi =  \beta \hbar \omega_i/2$. The lower figure plots $\R$ as a function of $\chi$. The red (blue) lines indicates the photon added (subtracted) case and the grey lines indicate the equivalent classical limit. That is, in the upper plot the grey line is the right hand side of the classical Crooks equality, Eq.~\eqref{eq:Crooks}, and in the lower plot the grey line is $\R = 1$. The solid lines plot the case $W =2 \hbar \omega_i$ and the dashed lines, $W = 0$. Here we suppose $\hbar \omega_f = 5 \hbar \omega_i$.}
\label{fig:PhotonAddSubTotalRatio}
\end{figure}

The classical Crooks equality implies that driving processes which require work and decrease free energy are exponentially more likely than processes which produce work and increase free energy, thus quantifying the irreversibility of non-equilibrium driving processes. Given that the generalised free energy terms $\Delta \tilde{F}^+$ and $\Delta \tilde{F}^-$ are greater than the usual change in free energy $\Delta F$, it is tempting to conclude that athermality of the initial system can strengthen irreversibility by amplifying the suppression factor of free energy increasing processes. However, the presence of the prefactor $\R$ in Eq.~\eqref{eq:addsubCrooks}, which depends on both the work done during the driving process and implicitly the initial and final free energies of the system, makes it harder to draw clear cut conclusions. 

To aid comparison between the athermal and thermal cases, in Fig.~\ref{fig:PhotonAddSubTotalRatio} we plot the total predicted ratio of the forwards and reverse processes for the photon added and subtracted Crooks relations, that is the right hand side of Eq.~\eqref{eq:addsubCrooks}, and compare them to the equivalent prediction of the classical relation, Eq.~\eqref{eq:Crooks}. We similarly plot the prefactors $\R_+$ and $\R_-$. As the prefactor $\R$ does not appear in the classical Crooks relation, Eq.~\eqref{eq:Crooks}, we can say that $\R$ is effectively equal to 1 in the limit of a perfectly thermal system. For concreteness, we here consider a forwards process where the oscillator frequency is doubled, increasing the system's free energy. We plot the ratio and $\R$ as a function of $\chi := \frac{\beta \hbar \omega}{2 }$, the ratio of vacuum energy to thermal energy, a measure which delineates between quantum and thermodynamic regimes. 


As shown in Fig.~\ref{fig:PhotonAddSubTotalRatio}, the interplay between the prefactors $\R_\pm$, which are greater than the classical limit of 1, and the terms $\exp(-\beta \Delta \tilde{F}_\pm)$, which are smaller than $\exp(- \beta \Delta F)$, leads to a rich spectrum of deviations from the classical Crooks relation. For example, while the prefactor $\R_+$ for the photon added case is substantially greater than 1 in the low temperature limit, the total predicted ratio is smaller than for the photon subtracted case. This is because the large value of $R_+$ is exponentially suppressed by $\Delta \tilde{F}_+$ which is substantially larger than $\Delta \tilde{F}_-$ and $\Delta F$, as shown in Fig.~\ref{fig:PhotonAddSubPlot}, due to the contribution of the change in vacuum energy. Thus we conclude that for the photon added relation, irreversibility is milder in the quantum limit due to the contribution of the energy of the vacuum state, a phenomena which was also observed in \cite{CoherentFluct}.


In the high temperature classical limit one might expect adding or subtracting a single photon to a thermal state containing on average a large number of photons would have a negligible effect.
Indeed this is what we see for processes in which no work is performed on the system since in the high temperature limit the prefactor $\R_\pm(0)$ reduces to $\exp(\beta \Delta F)$. However, interestingly for work requiring processes we do see large deviations from the usual classical Crooks relation in the classical limit. We attribute this to the fact that adding or subtracting a photon from thermal light effectively doubles the mean photon number the state, and therefore the net effect can be substantial even for high temperature states as they contain larger numbers of photons. 

More generally, for all temperatures and for both the photon added and subtracted relations, we find that the larger the work done on the system, the larger the predicted ratio. This confirms that even when the initial states are photon added or subtracted thermal states, processes which require work are exponentially more probable than processes that generate work. 

\subsection{Binomial states}\label{sec:GenCoherentStates}

In the previous section we showed how the \textit{athermality} of the initial \textit{system}, due to the addition or subtraction of a single photon, induces rich  deviations from the classical Crooks relation. Here we complement this analysis by exploring how quantum features can be introduced through the \textit{coherence} of the \textit{battery}.
The quantum fluctuation relations are well characterized for coherent states of the battery~\cite{CoherentFluct} which have close-to-classical properties.
In the following we will consider binomial states, which provide a well-defined transition between coherent states of a quantum harmonic oscillator, and highly quantum mechanical states such as a state of an individual qubit.

Binomial states are pure states of the form 
\begin{equation}\label{eq: binomial state}
    |n,p\> = \sum_{k=0}^n  \sqrt{{n \choose k}p^k(1-p)^{n-k}} \, e^{i \phi_k} |k\>,
\end{equation}
whose properties have been extensively studied in the field of quantum optics~\cite{BinomialStatesStoler,AtomicCoherentStates,BarnettAddedSubtracted,MultinomialStates}. Binomial states are non-classical states with finite support and exhibit sub-Poissonian statistics~\cite{BinomialStatesStoler,BarnettAddedSubtracted}, squeezing of quadratures~\cite{BinomialStatesStoler} and are highly non-classical both in terms of their coherent properties and the negativity of their Wigner function~\cite{BinomQuasiProb}. They can be thought of as an $n$-qubit tensor product  $|p\>^{\otimes n}$ of the states $|p\> = \sqrt{1-p}|0\> + \sqrt{p}|1\>$. The states $|n,p\>$ and $|p\>^{\otimes n}$ are related by an energy-preserving unitary rotation. This is important as the effective potential $\tilde{E}$ is invariant under energy conserving unitaries, implying that as far as the fluctuation theorem is concerned, they are interchangeable.
In the limit that $n$ tends to infinity they approach the regular coherent states and the opposite limiting case $n=1$ corresponds to the deep quantum regime.

Binomial states find use owing to their nice analytical properties. For instance, the commonly encountered spin-coherent states are particular examples of binomial states~\cite{GeneralizedCoherentStates,MultinomialStates,VisualizingSpin,SpinCoherent}. Spin-coherent states belong to a class of generalised coherent states that allow for different displacement operators, in this case of the form $D(\alpha) = \exp(\alpha S_+ + \alpha^* S_-)$  where $S_\pm$ are the spin-raising and lowering operators~\cite{AtomicCoherentStates,VisualizingSpin,GeneralizedCoherentStates}. Proposals for the generation of binomial states have been developed in atomic systems~\cite{PreparingSpinCS,BinomialStateProduction} and they have been suggested as analogues to coherent states for rotational systems~\cite{RotationSpinState,BinPotential}. These examples indicate that binomial states are of natural physical interest.

\medskip

In what follows, we assume the battery is a harmonic oscillator\footnote{One could also consider a finite Hamiltonian, however for complete generality, decoupling the dimension of the Hamiltonian and the support of the state proves useful.}, $H_B = \hbar \omega (a^\dagger a + \frac{1}{2})$, but do not make any specific assumptions on the initial and final system Hamiltonians. We assume the system is prepared in a standard thermal state and consider transitions between two binomial states of the battery. More specifically, here the battery measurement operators are chosen as the projectors
\begin{align}\label{eq:bin operators a}
X_B^k  = |n_k,p_k \>\< n_k , p_k |  \ \ \ \text{for} \ \ \ k=i, f \; .
\end{align}
which, given the mapping $\M$ in Eq. (\ref{eq:Mapping}), fixes the preparation states. As shown in Appendix \ref{ap:Binomial}, we find that the prepared states are the binomial states,
\begin{align}\label{eq:bin operators}
&\rho_B^k  = |n_k,\tilde{p}_k \>\< n_k , \tilde{p}_k |    \ \ \ \text{with,} \\ \ \
&\tilde{p} = \frac{p e^{-\beta \hbar \omega}}{p e^{-\beta \hbar \omega} + q} \ \ \ \text{and} \ \ \ \tilde{q} = \frac{q}{p e^{-\beta \hbar \omega} + q} \; ,
\end{align}
with $q = 1-p$ and for $k=i, f$. Thus we see that the mapping $\M$ preserves binomial statistics but leads to a distortion factor due to the presence of coherence. Since $\tilde{p}$ is always less than $p$, this distortion from $\M$ lowers the energy of the prepared state as compared to the equivalent measured state, with its energy vanishing in low temperature limit.

There exist two clear distinct physical regimes corresponding to different battery preparation and measurement protocols. In the \textit{realignment} regime, we fix the system size $n$ and consider transition probabilities between rotated states. Conversely, the \textit{resizing} regime quantifies transition probabilities between states of different `sizes', that is states with different supports but fixed alignment in the Bloch sphere. 
For the realignment regime, the prepare and measure protocols are as follows.
\begin{center}
    Forwards: The battery $B$ is prepared in the state $|n,\tilde{p}_i \>$ and measured in $|n,p_f\>$ \\
    Reverse: The battery $B$ is prepared in the state $|n,\tilde{p}_f\>$ and measured in $|n,p_i\>$.
\end{center}
 While for the resizing regime, where we fix $p$ and vary $n$, we have the prepare and measure protocol
\begin{center}
    Forwards: The battery $B$ is prepared in the state $|n_i,\tilde{p} \>$ and measured in $|n_f,p\>$. \\
    Reverse: The battery $B$ is prepared in the state $|n_f,\tilde{p}\>$ and measured in $|n_i,p\>$.
\end{center}

In the qubit picture, for a system of $N$ qubits the realignment regime amounts to fixing the number of battery qubits with coherence to precisely $n$ while changing the polarisation $p_k$ of each of these $n$ qubits concurrently. Similarly, the resizing regime corresponds to fixing the polarisation and changing the number of non classical qubits. More precisely, we can write
  \begin{equation}
    \ket{n_k, p_k} \equiv |p_k\>^{\otimes n_k} \otimes |0\>^{\otimes N-n_k} \ \ \ \text{for} \ k=i,f \; 
\end{equation}
where in the first regime $n_k$ is kept fixed while $p_k$ is varied and vice versa for the second. In the context of spin-coherent states, the first regime corresponds to a battery that remains a spin-$\frac{n}{2}$ system but whose orientation varies, while the second amounts to changing the magnitude of the spin while fixing the orientation.

\medskip

The key quantity in the fluctuation relation is the generalised work flow, the derivations of which can be found in Appendix \ref{ap:Binomial}. In these processes, the generalised work flow in the realignment regime and resizing regimes, $\Delta \tilde{W}_{\mbox{\tiny align}}$ and $\Delta \tilde{W}_{\mbox{\tiny size}}$ respectively, take the form
\begin{align}\label{eq:flow regime a}
    \beta \Delta \tilde{W}_{\mbox{\tiny align}}&=  n  \left(\ln \frac{p_f}{\tilde{p}_f} - \ln \frac{p_i}{\tilde{p}_i} \right) \\
     \beta \Delta \tilde{W}_{\mbox{\tiny size}} &= (n_f -n_i)  \left(\ln \frac{p}{\tilde{p}} + \beta \hbar \omega \right), \label{eq:flow regime s}
\end{align}
provided both $p_i$ and $p_f$ are non-zero.
These capture the temperature-dependent distortion of the binomial states due to $\M$. While the generalised work flow in the realignment regime smoothly varies with its free parameters, in the resizing regime the energy flow is discretised. The binomial state Crooks relations corresponding to the realignment and resizing regimes follow upon insertion of the generalised work flow terms, Eq.~\eqref{eq:flow regime a} and Eq.~\eqref{eq:flow regime s}, into the global fluctuation relation, Eq.~\eqref{eq:GlobalCrooks}, when restricted to binomial state preparations specified in Eq.~\eqref{eq:bin operators}.

In the high temperature limit,  $\beta \hbar \omega \ll 1$, we can truncate the power series of $\Delta \tilde{W}$ to second order for sufficient accuracy, which gives
\begin{align}\label{eq:flow regime 1}
\beta \Delta \tilde{W}_{\mbox{\tiny align}} &\approx  \beta\hbar \omega n \left( p_i - p_f\right) -  \frac{(\beta \hbar \omega)^2}{2} (\sigma_i^2 - \sigma_f^2) \\
\beta \Delta \tilde{W}_{\mbox{\tiny size}} &\approx \beta \hbar \omega (n_i - n_f) p - \frac{(\beta \hbar \omega)^2}{2} (n_i - n_f) \sigma^2, \label{eq:flow regime 2}
\end{align}
where $\sigma_k^2 = n p_k (1-p_k)$ is the variance of $H_B$ in the state $|n,p_k\>$ for $k= i,f$ and $\sigma^2 = p(1-p)$ is the variance for a Bernoulli distribution. Note that the variance evaluated for pure states is a genuine measure of coherence~\cite{VarianceAsymmetry} and that due to microscopic energy conservation, that is the fact $U$ commutes with $H_{SB}$, both \textit{energy and variance in energy} are globally conserved. Given this,  Eq.~\eqref{eq:flow regime 1} and Eq.\eqref{eq:flow regime 2} characterise the change in energy and coherence of the system due to an equal and opposite change in the battery. 

Furthermore, binomial states exhibit sub-Poissonian statistics, that is the variance $n p (1-p) $ is smaller than the mean $np$ (for non vanishing $p$). Therefore, it follows from Eq.~\eqref{eq:flow regime 1} and Eq.\eqref{eq:flow regime 2} that the fluctuation relation (\ref{eq:GlobalCrooks}) captures the sub-Poissonian character of these states and shows that this affects the resulting irreversibility of the dynamics. Viewed through the lens of  quantum optics, binomial states of light are anti-bunched~\cite{BinomQuasiProb}, a  signature of non-classicality. Thus the binomial state Crooks equality draws a non-trivial link between bunching and the reversibility of quantum driving processes, since anti-bunching and sub-Poissonian statistics are directly correlated for single-mode time-independent fields~\cite{BinomialStatesStoler}.

In the case of spin-coherent states, the Hamiltonian is in effect taken to be defined in the eigenbasis of the spin-$z$ operator and therefore the variances in Eqs (\ref{eq:flow regime 1}) and (\ref{eq:flow regime 2}) detail the variation of uncertainty in the spin-$z$ component.
However, aligning the Hamiltonians in the $z$-direction defines a preferential axis and therefore the spin-$z$ and the spin-$x$ and spin-$y$   components are not placed on equal footing. This is because the effective potential is invariant under unitary transformations $U$ that commute with $H$, that is
\begin{equation}
    \tilde{E} (\beta, H, \rho) =  \tilde{E} (\beta, H, U \rho U^\dagger) \  \ \ \forall \ [U, H]=0 \; ,
\end{equation}
and hence is invariant under rotations about the $z$-axis. Consequently, while the fluctuation relation captures changes to the uncertainties in the spin-$z$ components, the relation is unaffected by changes to uncertainties in the spin-$x$ and spin-$y$ components. More generally, the invariance of the effective potential to phase rotations means that even for standard coherent states, the fluctuation relation depends on the magnitude of the absolute displacement but not the particular magnitude of the expectation values for position and momentum. This is no coincidence, as the connection between these regimes will be explored further on.

\bigskip

\textit{Deviations from Classicality. $\ \ $} To characterise the  deviations between the binomial state Crooks relation and classical Crooks equality, we can compare the generalised energy flow $\Delta \tilde{W}$ to the actual energy flow in the forwards and reverse processes. In the standard Crooks equality, the work term appearing in the exponent of (\ref{eq:Crooks}) can be expressed as  $W =  (W - (-W))/2$,  the average difference between the work done in the forward and reverse processes. For the quantum analogue, we introduce 
\begin{equation}
    W_q = (\Delta E_+ - \Delta E_-)/2
\end{equation}
as the difference between the energy cost $\Delta E_+$ of the forwards process
and the energy gain $\Delta E_-$ of the reverse process.
Restricted to binomial state preparations of the form (\ref{eq:bin operators}), the  binomial states Crooks relation is
\begin{equation}
    \frac{\mathcal{P}(n_f,p_f|\gamma_i; n_i,\tilde{p}_i)}{\mathcal{P}(n_i,p_i|\gamma_f; n_f,\tilde{p}_f)}
    = \exp \Big(\beta \left(q(\chi) W_q-  \Delta F \right)\Big). 
\end{equation}
where the transition probabilities take the form
\begin{equation}
\begin{aligned}
   \P(n_f,p_f| \gamma_i;n_i,\tilde{p}_i) \\ :=\mathcal{Q} \Big(  \I \otimes |n_f,p_f\>\<n_f,p_f| \Big\vert  \gamma_i \otimes &|n_i,\tilde{p}_i\>\< n_i , \tilde{p}_i| \Big)  
\end{aligned}
\end{equation}
and we introduce the quantum distortion factor
\begin{equation}
    q(\chi) := \frac{\Delta \tilde{W}}{W_q}
\end{equation}
as the ratio between the generalised work flow and the actual energy flows. The classical limit $q(\chi) = 1$ corresponds to a quasi-classical expression in which the quantum fluctuation relation depends only on the energy difference between the two states $|n_i,p_i\>$ and $|n_f,p_f\>$. This can be seen from Eqs. (\ref{eq:flow regime 1},\ref{eq:flow regime 2}) when truncating to first order in $\beta \hbar \omega$. Deviations from unity thus capture the quantum features of the process. 

\medskip

The resizing and re-aligning protocols experience two related yet distinct distortions. These factors, derived in Appendix \ref{ap:Binomial}, are
\begin{equation}\label{eq:q1}
\begin{aligned}
&q_{\mbox{\tiny align}} (\chi) = \frac{1}{\chi} \frac{\ln(\tilde{p}_f/p_f) - \ln( \tilde{p}_i / p_i)}{(\tilde{p}_f - \tilde{p}_i ) + (p_f - p_i)}  \ \ \ \text{and} \\  &q_{\mbox{\tiny size}} (\chi) = \frac{1}{\chi} \frac{\ln(\tilde{p}/p) + 2\chi}{\tilde{p} + p}
\end{aligned}
\end{equation}
respectively, again provided neither $p_i$ nor $p_f$ vanish. These two factors are plotted in Fig. \ref{Fig: bin q}. They are equal to each other if one of either $p_i$ or $p_f$ are zero, corresponding to measuring the battery in the ground state, as can be seen with the long-form equations provided in Appendix \ref{ap:Binomial} (see Eqs (\ref{ap eq: q1}) and (\ref{ap eq:q2})). 

Both factors are independent of the system size $n$. 
That only the parameter $p$ plays a non-trivial role is relevant to the fact that it alone controls the coherent properties of binomial states.
 Since $n$ is the free parameter of the resizing regime, it is particularly significant that
 the deviation is independent of the change in system size. Beyond this, the realignment factor  is symmetric in the parameters $p_i , \, p_f$ and thus does not depend on the chosen ordering of the measurements (likewise for the resizing factor with respect to $n_i$ and $n_f$).

\begin{figure}[t]
\centering
\subfloat{\includegraphics[width=\linewidth]{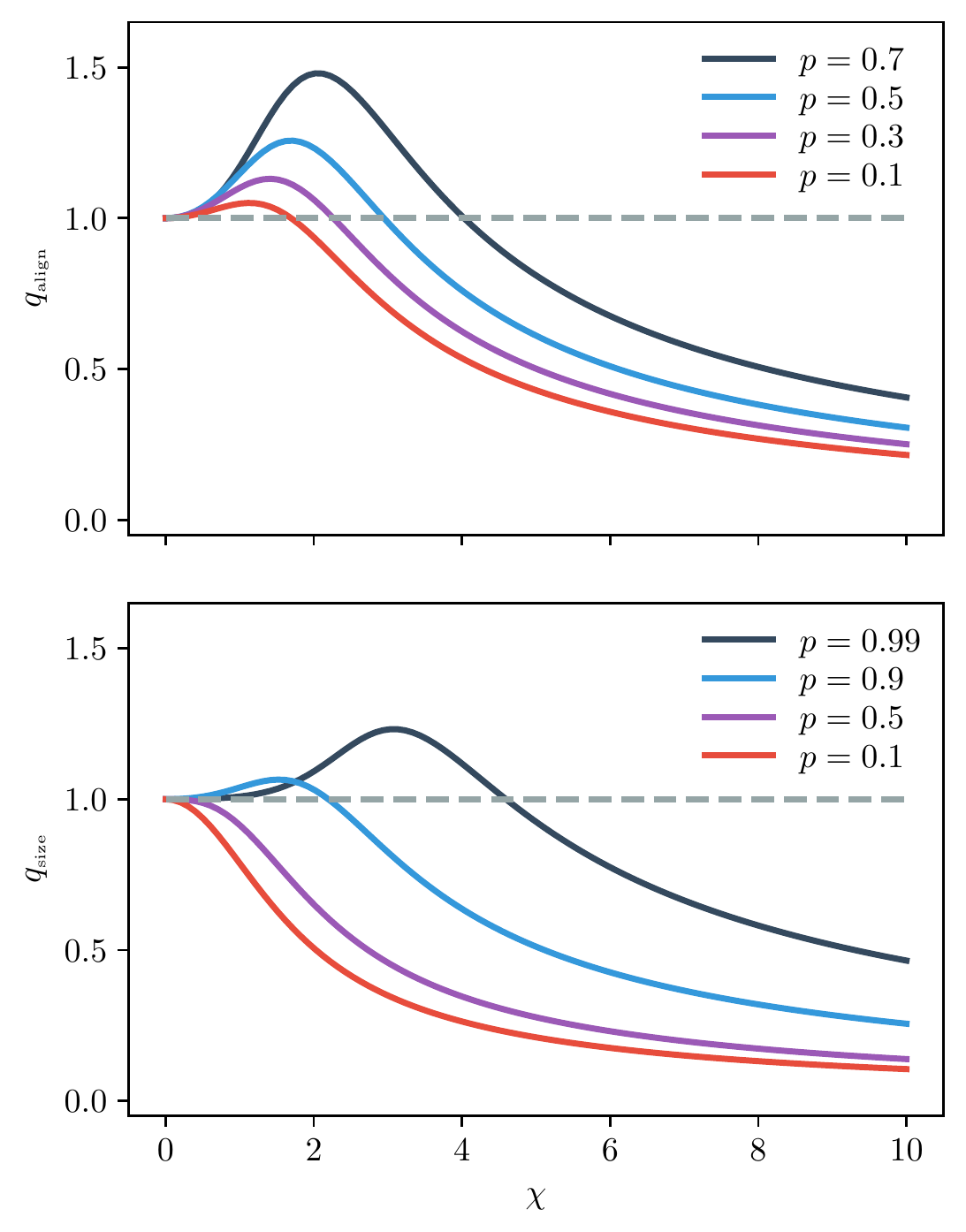}}
\caption{\textbf{Quantum distortions of fluctuation relations due to binomial battery states:} The upper and lower plots correspond to $q_{\mbox{\tiny align}}$ and $q_{\mbox{\tiny size}}$ respectively. The left plot is evaluated for a fixed value $p_f =0.8$. Both functions are plotted against the quantum-thermodynamic ratio $\chi = \frac{\beta \hbar \omega}{2 }$. The plots show that the distortion due to quantum features can both enhance and suppress irreversibility in a process as compared to a `classical equivalent' solely involving energy exchanges. In both cases, we typically find suppressed irreversibility as quantum features dominate for large values of $\chi$. Yet when thermodynamic and quantum energy scales are of similar magnitude, we observe unexpected behaviour.}
\label{Fig: bin q}
\end{figure}

Regarding the thermodynamic properties, both factors exhibit a sensible classical limit in the thermally dominated regime where $\chi$ is much less than one and $\tilde{p} $ converges to $p$. More generally, in the quantum dominated regime at large $\chi$, the distortion is generally sub-unity scaling as $1/\chi$, showing the irreversibility is milder than is classically expected. To understand this, consider the fact that $\tilde{E}(\beta,H,\rho)$ is lower bounded by  $E^{\rm min}(\rho)$, defined as the smallest energy eigenvalue with non-zero weight in the state  $\rho$~\cite{erick}, corresponding to the vacuum energy for all binomial states with $p<1$.  In the low temperature limit, the lower bound is saturated meaning that the generalised energy flow (accounted for by the differences in $\tilde{E}$ between any two states) vanishes. Yet, as shown in Fig. \ref{Fig: bin q}, this behaviour is not true for all temperatures and  values of $p$. 

In the resizing regime, for values of $p$ nearing unity, there exists a finite temperature region where the fluctuation relation exhibits stronger-than-classical irreversibility. 
Peaking for values of $p \approx 1$ in the intermediate region originate because the semi-classical two-point measurement scheme is recovered when $p=1$, which corresponds to an energy eigenstate, hence $q_{\mbox{\tiny size}}(\chi) = 1$. The states satisfying this condition on $p$ must remain close to an energy eigenstate and have a flatter initial slope until larger values of $\chi$ overcome this almost-eigenstate behaviour and recover the $1/\chi$ scaling. 

The behaviour of the realigning regime is more nuanced, having two free parameters. We observe greatly enhanced irreversibility over a finite temperature region for most values of $p_i$ or $p_f$ if the sum of these values are $\gtrapprox 1$. 
An oddity occurs when one measures an excited energy eigenstate, corresponding to\footnote{Due to symmetry in the parameters, one can also set $p_i = 1$ and let $p_f$ be free.} $p_f=1$. In this case, at extremely low temperatures Eq. (\ref{eq:q1}) is modified to
\begin{equation}
\lim_{\chi \to \infty} q_{\mbox{\tiny align}} (\chi) =  \frac{2}{2 - p_i} \geq  1,
\end{equation}
and the quantum regime no longer asymptotically approaches zero. Rather, we have that  $\tilde{E}(\beta,H,\rho)$ is naturally upper bounded by $E^{\rm max}(\rho)$, defined as the largest eigenvalue with non-zero weight in the state $\rho$~\cite{erick}. With the battery prepared in the excited state for either the forward or reverse protocol, we have that $\tilde{E}(\beta, H_B,|n,1\>) = E^{\rm max}$, and  the greatest possible generalised energy flow of $\Delta \tilde{W} = E^{\rm max} - E^{\rm min}$ occurs when the lower bound of $E^{\rm min}$ is saturated.
By fixing one state to be the excited energy eigenstate, the generalised energy flow only attains this upper bound when the temperature reaches absolute zero.

At low temperatures, for values of $p$ nearing unity the deviations from classicality are most pronounced for both regimes. Due to the temperature-dependent rescaling, this choice of parameter  corresponds to the physical preparation of  states with greater coherence present, as detailed by Eq. (\ref{eq:bin operators}) where $p $ is greater than $\tilde{p}$ for all positive temperatures. Initialising the battery in a state with a large amount of coherence thus generates the non-classical behaviour we would expect.

\medskip

From this analysis, we can conclude that binomial states batteries display a greater range of distinguishing features than coherent states, with the coherent properties playing a highly non-trivial role. We observe behaviour that is reminiscent of the semi-classical coherent state Crooks equality in the high and low temperature limits. In an intermediate temperature region however, we observe deviations that lead to stronger than classical irreversibility in both the resizing and realignment regimes.
We note that the binomial state factors bear many qualitative similarities to the squeezed-state factors derived in~\cite{CoherentFluct}. The connection between binomial and coherent states in an appropriate limit are discussed next.

\bigskip

\textit{The Harmonic Limit. $\ \ $}
Infinite dimensional binomial states in harmonic systems exhibit behaviour that approaches simple harmonic motion. This link is well established and leads to a semi-classical limit for the binomial state fluctuation theorem. 
Specifically, as shown in Appendix~\ref{ap:Binomial}, we prove that as $n$ tends to infinity, the binomial state $|n,p\> $ tends to the coherent state $|\alpha\>$ where the displacement parameter is given by $\alpha =\sqrt{np}$ and thus is only defined as long as $np$ remains finite.
Consequently, for infinitely large spin systems, or infinitely large ensembles of qubits, with a finite expected polarisation, binomial states reduce to coherent states. Thus, in this limit, the binomial state and coherent state Crooks equalities~\cite{CoherentFluct} are quantitatively and qualitatively identical.

It follows that for infinite dimensional binomial states $q_{\mbox{\tiny align}} (\chi)$ and $q_{\mbox{\tiny size}}(\chi)$ converge on
\begin{equation}\label{eq:qCoherentState}
    q(\chi) = \frac{1}{\chi} \tanh (\chi)
\end{equation} 
This form admits a special interpretation in terms of the mean energy of a harmonic oscillator $\hbar \omega_{\rm th} := \< H_B \>_\gamma$, with $q(\chi) = k_B T/\hbar \omega_{\rm th}$. In particular, the average energy in a thermal harmonic oscillator is related to the thermal de Broglie wavelength $\lambda_{\rm th}$~\cite{CoherentFluct}. The thermal de Broglie wavelength often finds use as a heuristic tool to differentiate between quantum and thermodynamic regimes. The coherent state equality thus leads to a natural and smooth transition between quantum and thermal properties for semi-classical battery states delineated by $\lambda_{\rm th}$, suggesting a genuinely quantum-thermodynamic relation.


It is interesting then that the binomial state fluctuation relation is able to incorporate a wide-ranging set of features, all the way from the highly quantum single-qubit states to the semi-classical coherent state limit, together in the same framework. 


\subsection{Energy translation invariance, Jarzynski relations and stochastic entropy production}\label{sec:Jarz}

The photon added and subtracted Crooks equalities both quantify transition probabilities between states of the battery. If we assume that the system and battery dynamics depend only on the change in energy of the battery and not the initial energy of the battery then we can rewrite the relation in terms of the probability distributions for the change in energy of the battery, that is the work done on the system. This conceptual move allows us to derive a Jarzyski-like relation for photon added and subtracted thermal states and hint at a link between the generalised free energy change and stochastic entropy production.

If the system and battery dynamics are independent of the initial energy of the battery, then the following energy translation invariance condition holds
\begin{equation}
  \P(E_j | \gamma_i^\pm, E_k) =  \P((E_j - E_k) + E_l | \gamma_i^\pm, E_l)  \ \ \ \forall \ E_j,\, E_k, E_l\, \; .
\end{equation}
We can now define the work probability distributions in the forwards (F) and reverse (R) processes for the photon added (+) and subtracted relations (-) as 
\begin{align}
    \P_F^\pm(W) &:= \sum_{w} \P \left(E_0 - w | \gamma_i^\pm, E_0\right)  p\left(E_0\right) \delta\left(W - w  \right)  \ \ \ \text{and} \\
    \P_R^\pm(W) &:= \sum_{w} \P\left(E_0 - w | \gamma_f^\pm, E_0\right)  p\left(E_0\right) \delta\left(W - w  \right) \ 
\end{align}
where $p(E_0)$ is the probability that the battery is prepared with energy $E_0$. It now follows, as shown in Appendix~\ref{ap:DerivationPhotonAddSub} that the photon added and subtracted Crooks relation can be written explicitly in terms of these work distributions as 
\begin{equation}
      \frac{\P_F^\pm(W)}{\P_R^\pm(-W)} = \R_\pm(W) \exp\left(\beta (W \mp \Delta E_{\mbox{\footnotesize vac}}- 2\Delta F )\right) \; . 
\end{equation}

The classical Jarzynski equality, which quantifies the work done by a driven system for a \textit{single} driving process, emerges as a corollary to the classical Crooks equality. Similarly, here by rearranging and taking the expectation of both sides of the above equality we obtain the photon added and subtracted Jarzynski relation
\begin{equation}
        \left\langle  \frac{1}{\R_\pm(W)} \exp(-\beta W) \right\rangle =  \exp \left(-\beta (2 \Delta F \pm \Delta E_{\mbox{\footnotesize vac}} )\right)  \; .
\end{equation}
This relation complements our Crooks relation, Eq.~\eqref{eq:addsubCrooks}, by relating the work done on the athermal system for a \textit{single} driving process, where the system's Hamiltonian is changed from $H_S^i$ to $H_S^f$, to the associated change in free energy. 

In classical stochastic thermodynamics, when generalising fluctuation relations to non-equilibrium initial states, such as photon added or subtracted thermal states, a natural quantity to consider is the stochastic entropy production. As expected and as shown in \cite{hyukjoon}, in the limit of a classical battery which is assumed to be energy translation invariant, this inclusive setting obeys the classical Crooks equality in its formulation in terms of stochastic entropy production~\cite{Crooks}. This suggests it may be possible to directly relate the generalised free energies term of the global fluctuation relation for non-equilibrium system states to stochastic entropy production. While these ideas were touched on in \cite{hyukjoon}, explicitly stating this link remains an open question.

\medskip

An analogous approach for the binomial state Crooks equality encounters difficulties. States with coherence undergo a temperature dependent rescaling and therefore the initial and final states in the forwards and reverse process are related but not equivalent. Thus due to the presence of coherence, energy translation invariance is not a sufficient condition to rewrite the binomial state Crooks relation in terms of work probability distribution. Therefore we cannot derive a Jarzysnki-like equality and the link with stochastic entropy production is further obscured. Similar problems arise for states such as coherent, squeezed and Schr\"odinger cat states as were studied in \cite{CoherentFluct}.

\section{Conclusions and Outlook}
In this paper we have probed deviations from the classical Crooks equality induced by the initial state of the system or battery and the measurements made at the end of the driving process. However, we stress that the choice in prepared states and measurement operators is not the only manner in which the relation is non-classical. Rather the dynamics induced by the unitary evolution will in general entangle the system and battery resulting in coherence being exchanged between the two systems. Thus the evolved state may be a highly non-classical state. For example, for the coherent state Crooks equality the battery are prepared in a coherent state, the most classical of the motional states of a harmonic oscillator. However, driving the battery with a change in Hamiltonian $H_S^i$ to $H_S^f$ using the experimental scheme proposed in~\cite{CoherentFluct}, results in the highly non-Gaussian state with a substantially negative Wigner function. The non-classicality of the final state can be amplified by repeating the driving process a number of times, that is cycling through changes of $H_S^i$ to $H_S^f$ back to $H_S^i$ and again to $H_S^f$ repeatedly. 

\begin{figure}[t]
  \centering
{\includegraphics[width=\linewidth]{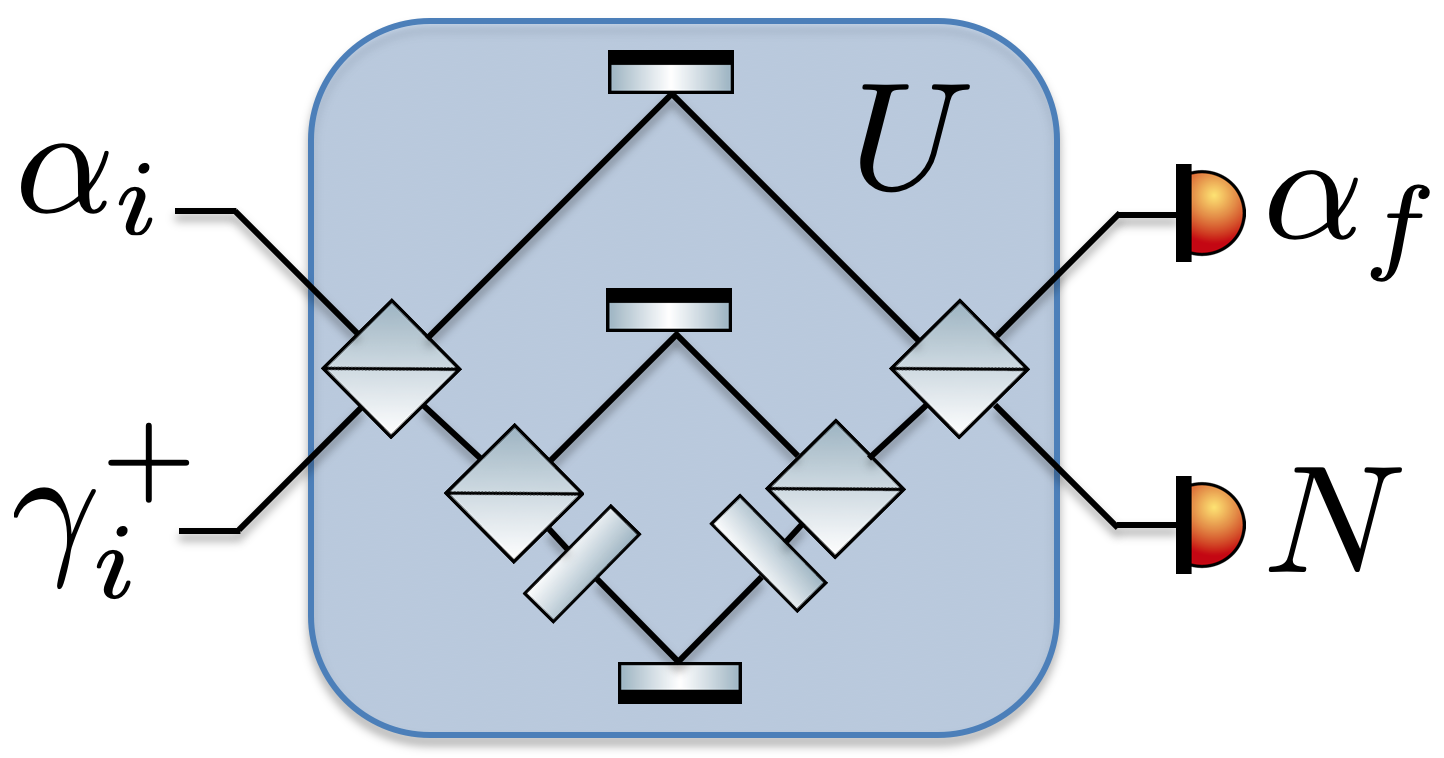}}
\caption{\textbf{Linear optic implementation schematic.} A photon added (or subtracted) thermal state is sent into one input arm of a linear optical set up and a coherent state the other. The linear optical set up, consisting of a series of linear optical elements such as beamsplitters, phase-shifters and mirrors (the particular sequence sketched here is chosen  arbitrarily), drives the photonic system and battery with an energy conserving and time reversal invariant operation. Finally, a coherent state measurement is performed on one output arm of the optical setup using a homodyne detection and the number of photons out put is measured in the other arm. }
\label{fig:ExpSchematic}
\end{figure}

The photon added and subtracted Crooks relations could be tested by supposing that both the system and battery are photonic and using a linear optical setup, as sketched in Fig.~\ref{fig:ExpSchematic}. Preparing a photonic battery in a high energy eigenstate, that is a Fock state containing a large but well defined number of particles, would be experimentally challenging and thus a more promising avenue is to consider a battery in a coherent state by driving one input arm with a laser. Such a scenario would be quantified by a coherent state photon added and subtracted Crooks relation.
A limitation of this implementation is that it would not change the effective Hamiltonian of the system and thus only probe the relation in the limit that $\Delta F$ and $\Delta  E_{\mbox{\footnotesize vac}}$ vanish. Constructing a physical implementation involving a change to the system frequency requires more imagination. One possibility would be to generalise the trapped ion implementation proposed in~\cite{CoherentFluct} but use a pair of internal energy levels to simulate a thermal state of an oscillator. This could be done by changing the background potential to simulate a wider range of energy level splittings.

One possible means of testing the binomial state Crooks equality would be to prepare a finite number of qubits in the state $|p\> = \sqrt{1-p}|0\> + \sqrt{p}|1\>$ and perform a unitary algorithm that interacts the qubits with a thermal system. This could perhaps be best performed on a quantum computer by utilising methods for Hamiltonian simulation~\cite{VQE, TimeDepHamSim} and with the thermal system modelled using `pre-processing'~\cite{CoherentFluct}. One would need to restrict to unitaries that conserve energy between the qubits and the thermal system. Both regimes could be probed with this set-up, where one could have an $N$ qubit register and in one case prepare $n_i$ or $n_f$ qubits in the state $|p\>$, where $n_i,n_f \leq N$, or in the other case a fixed number of qubits could be individually addressed to rotate them in the Bloch sphere.  Measurements in different bases are routinely performed on quantum computers and thus the measurement procedure is readily implemented.

We have taken a highly general but rather abstract fluctuation relation and shown how its physical content can be elucidated through a study of particular examples of interest. However, the cases we have considered are just a sample of the diverse range of phenomena that can be explored with this framework. While we have developed Crooks equalities for thermal systems to which a single photon has been added or subtracted, a natural extension to probe further perturbations from thermality would be to generalise our results to the case where multiple photons are added to or subtracted from the thermal state, or perhaps the case when a photon is added and then subtracted from a thermal state. Similarly, one could quantify higher order quantum corrections to the Crooks relation by developing equalities for squeezed and cat binomial states. On a different note, incoherent binomial states, that is the dephased variant of a binomial state, model Fock states that have been transmitted through a lossy channel and thus model a lossy classical battery. Given the structural similarities between incoherent and coherent binomial states, our results here could be used to develop Crooks relations for imperfect batteries.


\acknowledgements
We are grateful for insightful conversations with Hyukjoon Kwon. We acknowledge support from the Engineering and Physical Sciences Research Council Centre for Doctoral Training in Controlled Quantum Dynamics and the Engineering and Physical Sciences Research Council. 

\bibliographystyle{ieeetr} 
\bibliography{refs}

\appendix

\onecolumngrid

\appendix

\section{Derivation of photon added and subtracted Crooks equality}\label{ap:DerivationPhotonAddSub}

The photon added (subtracted) Crooks equality is derived from the global fluctuation relation by supposing that the system is prepared in a photon added (subtracted) thermal state. That is for the photon added (+) and photon subtracted (-) equalities we suppose that the system is prepared in the states 
\begin{align}
     &\rho_S^i = \gamma_{H_S^i}^\pm  \ \ \ \text{and} \ \ \ \rho_S^f =  \gamma_{H_S^f}^\pm
\end{align}
for the forwards and reverse process respectively where the photon added state and subtracted states are defined as
\begin{equation}
    \gamma_{H}^+ \propto a^\dagger \exp\left(- \beta H \right) a  \ \ \ \text{and} \ \ \  \gamma_{H}^- \propto a \exp\left(- \beta H \right) a^\dagger \; 
\end{equation}
respectively.
In what follows we will use the short hand $\gamma_{i}^\pm \equiv \gamma_{H_S^i}^\pm$ and $\gamma_{f}^\pm \equiv \gamma_{H_S^f}^\pm$ to simplify notation. For concreteness we consider an quantum harmonic oscillator system with initial and final Hamiltonians given by 
\begin{equation}\label{eq:SystemHamiltonians}
    H_S^k := \hbar  \omega_k \left(a_k^\dagger a_k  + \frac{1}{2}\right) \; , 
\end{equation}
for $k =i$ and $k= f$, such that the system is driven by a change in its frequency from $\omega_i$ to $\omega_f$. 

\medskip

We leave the battery Hamiltonian $H_B$ entirely general and in order to isolate the deviations to the classical Crooks equality due to the athermality of the initial system states, we consider a semi classical battery which is prepared and measured in the energy eigenbasis. Specifically we assume that 
\begin{equation}
    \rho_B^i = \ket{E_i}\bra{E_i} \ \ \ \text{and} \ \ \ \rho_B^f = \ket{E_f}\bra{E_f} \  
\end{equation}
where $\ket{E_i}$ and $\ket{E_f}$ are energy eigenstates of $H_B$. Given that the battery is prepared in energy eigenstates, the measurement operators $X_B^i$ and $X_B^f$ specified by Eq.~\eqref{eq:Mapping} are also projectors onto energy eigenstates, that is 
\begin{equation}
    X_B^i = \ket{E_i}\bra{E_i} \ \ \ \text{and} \ \ \ X_B^f = \ket{E_f}\bra{E_f} \; .
\end{equation}
It follows that the generalised energy flow $\Delta \tilde{E}$, Eq.~\eqref{eq:GenWork}, evaluates to the change in energy of the battery,
\begin{equation}
    \Delta \tilde{W} = E_i - E_f \equiv W \; , 
\end{equation}
which by global energy conservation is equivalent to the work done, $W$, on the system. 

\medskip

To derive the photon added and subtracted Crooks relations from the global fluctuation, we need to determine the measurement operators $X_S^i$ and $X_S^f$ which are related to the initial photon added and subtracted states by the mapping $\M$, Eq.~\eqref{eq:Mapping}. Specifically inverting Eq.~\eqref{eq:Mapping} we have that the measurement operators for the photon added, $X_S^{k+}$, and subtracted, $X_S^{k-}$, cases respectively are related to the photon added and subtracted thermal states by
\begin{align}
     X_S^{k\pm} \propto  \exp\left(\chi_k a_k^\dagger a_k \right) \gamma_{k}^\pm \exp\left(\chi_k a_k^\dagger a_k \right) 
\end{align}
where $\chi_k = \frac{\beta \hbar \omega_k }{2 }$. On substituting in the explicit expressions for $\gamma_{k}^+$ and $ \gamma_{k}^-$, and using the Hadamard Lemma, we find that
\begin{align}
      X_S^{k+} &\propto \exp\left(\chi_k a_k^\dagger a_k \right) a_k^\dagger \exp\left(- 2\chi_k a_k^\dagger a_k \right) a_k \exp\left(\chi_k a_k^\dagger a_k \right)  
     \propto a_k^\dagger a_k  \ \ \ \ \ \ \text{and similarly,}  \\
     X_S^{k-} &\propto \exp\left(\chi_k a_k^\dagger a_k \right) a_k\exp\left(-2 \chi_k a_k^\dagger a_k \right) a_k^\dagger  \exp\left(\chi_k a_k^\dagger a_k \right)  
     \propto a_k a_k^\dagger   \; . 
\end{align}
We note that any constants of proportionality in front of the measurement operators $X_S^i$ and $X_S^f$ will cancel out in the final relation and thus we are free to set them to 1. We therefore conclude that the measurement operators for the photon added Crooks relation, forced by the mapping $\M$, Eq.~\eqref{eq:Mapping}, are given by
\begin{align}\label{eq:MeasureOps1}
     &X_S^{i+} = a_i^\dagger a_i \equiv N_i \  \ \ \text{and} \ \ \ X_S^{f+} = N_f
\end{align}
and the measurement operators for the photon subtracted equality are equal to
\begin{align}\label{eq:MeasureOps2}
     &X_S^{i-} = a_i a_i^\dagger = N_i + 1 \ \ \text{and} \ \ \ X_S^{f-} = N_f +1 \; 
\end{align}
where $N_i$ and $N_f$ are the initial and final number operators respectively.

The photon added Crooks equality thus quantifies the ratio of \begin{equation}
 \begin{aligned}
    \mathcal{Q}\left(a_f^\dagger a_f \otimes \ket{E_f}\bra{E_f}\bigg{|}\gamma_i^+ \otimes \ket{E_i}\bra{E_i}\right)  = n(E_f | \gamma_i^+, E_i) \P(E_f | \gamma_i^+, E_i) 
\end{aligned}  
\end{equation}
for a forwards process, and
\begin{equation}
 \begin{aligned}
    \mathcal{Q}\left(a_i^\dagger a_i \otimes \ket{E_i}\bra{E_i}\bigg{|} \gamma_f^+ \otimes \ket{E_f}\bra{E_f}\right)  = n(E_i | \gamma_f^+, E_f) \P(E_i | \gamma_f^+, E_f) 
\end{aligned}  
\end{equation}
of a reverse process. Here $n(E_f | \gamma_i^+, E_i)$ ($n(E_i | \gamma_f^+, E_f)$) is the average number of photons measured in the system at the end of the forwards (reverse) process, conditional on the battery being measured to have the energy $E_f$ ($E_i$). Similarly, the photon subtracted Crooks equality quantifies the ratio of 
\begin{equation}
\begin{aligned}
      \mathcal{Q}\left((a_k^\dagger a_k + 1) \otimes \ket{E_k}\bra{E_k}\bigg{|} \gamma_j^+ \otimes \ket{E_j}\bra{E_j}\right) = \left(n(E_k | \gamma_j^+, E_j) +1 \right) \P(E_k | \gamma_j^+, E_j)   
\end{aligned}
\end{equation}
for a forwards process, with $j=i$ and $k=f$, and a reverse process, with $j=f$ and $k=i$. 

\medskip

It remains to calculate the generalised free energy $\Delta \tilde{F}$ for the measurements $X_S^i$ and $X_S^f$ as defined in Eq.~\eqref{eq:MeasureOps1} and Eq.~\eqref{eq:MeasureOps2}. To do so we start by noting that $\Delta \tilde{F}$ can be written as 
\begin{equation}\label{eq:reWrittenGenFree}
    \Delta \tilde{F} = k_B T \ln\left(\frac{\tilde{Z}\left(\beta, H_S^i, X_S^i \right)}{\tilde{Z}\left(\beta, H_S^f, X_S^f \right)} \right) \ \ \ \text{where} \ \ \ \tilde{Z}\left(\beta, H, X \right) := \Tr[\exp(-\beta H) X] \; .
\end{equation}
As our notation suggests, $\tilde{Z}$ is an operator dependent mathematical generalisation of the usual thermodynamic partition function, 
\begin{equation}\label{eq:ExplicitPartitionFunc}
   Z(\beta, H_S^k) := \Tr[\exp(- \beta H_S^k)] \ .
\end{equation}
For the oscillator Hamiltonians defined in Eq.~\eqref{eq:SystemHamiltonians}, we find by working in the number basis that
\begin{equation}
   \begin{aligned}\label{eq:ExplicitFtildeAddSub}
&\tilde{Z}\left(\beta, H_S^k, N_k \right)
=  \sum_{n_k=0}^\infty  n_k  \exp(-2\chi_k (n_k + 1/2))
=  \frac{\exp(\chi_k)}{(\exp(2\chi_k)-1)^2} \ \ \ \ \ \ \text{and} \\
&\tilde{Z}\left(\beta, H_S^k, N_k+1 \right)
=  \sum_{n_k=0}^\infty  (n_k + 1)  \exp(-2\chi_k (n_k + 1/2)) =  \frac{\exp(3\chi_k)}{(\exp(2\chi_k)-1)^2}    \;. 
\end{aligned} 
\end{equation}
The physical content of these expressions can be elucidated by rewriting them in terms of the usual partition function, which evaluates to
\begin{equation}\label{eq:ExplicitPartitionFunc2}
 Z(\beta, H_S^k) = \frac{\exp(\chi_k)}{\exp(2\chi_k)-1} \ .
\end{equation}
On substituting Eq.~\eqref{eq:ExplicitPartitionFunc2} into Eq.~\eqref{eq:ExplicitFtildeAddSub} we obtain
\begin{equation}
    \begin{aligned}\label{eq:GenFreePartFunc}
    &\tilde{Z}\left(\beta, H_S^k, N_k \right) : =  Z_k \frac{1}{\exp(2\chi_k)-1} = (Z_k)^2 \exp(-\chi_k) \ \ \ \text{and} \\
    &\tilde{Z}\left(\beta, H_S^k, N_k + 1 \right) = Z_k \frac{\exp(2\chi_k) }{\exp(2\chi_k)-1} = (Z_k)^2 \exp(\chi_k) 
\end{aligned}
\end{equation}
where we have introduced the short hand $Z_k \equiv  Z(\beta, H_S^k)$. Finally, on substituting Eq.~\eqref{eq:GenFreePartFunc} into Eq.~\eqref{eq:reWrittenGenFree}, and using the fact that because 
\begin{equation}
     \frac{Z_f}{Z_i}  =  \exp(-\Delta F/k_B T) \ \ \ \text{it follows that}  \ \ \  \left(\frac{Z_f}{Z_i}\right)^2  =  \exp(-2 \Delta F/k_B T)
\end{equation}
we find that 
\begin{equation}
    \Delta \tilde{F}^\pm = \pm \Delta E_{\mbox{\footnotesize vac}} + 2 \Delta F \, .
\end{equation}
In the above we have introduced $\Delta E_{\mbox{\footnotesize vac}}$ as the difference between the vacuum energies of the harmonic oscillator at the start and end of the forwards driving process, 
\begin{equation}
    \Delta E_{\mbox{\footnotesize vac}} := \frac{1}{2}\hbar \omega_f - \frac{1}{2} \hbar \omega_i \; .
\end{equation}

\medskip

The photon added (+) and photon subtracted (-) Crooks equality can thus be written as
\begin{equation}\label{eq:APaddsubCrooks}
    \frac{\P(E_f | \gamma_i^\pm, E_i)}{\P(E_i | \gamma_f^\pm, E_f)} = \R_\pm \exp\left(\beta (W \mp \Delta E_{\mbox{\footnotesize vac}}- 2\Delta F )\right) \; ,
\end{equation}
where the prefactors $\R_+$ and $\R_-$ are defined as
\begin{align}\label{eq:Rplusminus}
    &\R_+ := \frac{ n(E_i | \gamma_f^+, E_f)}{ n(E_f | \gamma_i^+, E_i)} \ \ \ \text{and}  \ \ \  \R_- := \frac{ n(E_i | \gamma_f^-, E_f)+1}{ n(E_f | \gamma_i^-, E_i)+1} \ .
\end{align}
Since the number of photons in the system is necessarily a positive quantity, the prefactors are only defined when both the numerator and denominator of Eq.~\eqref{eq:Rplusminus} are positive quantities.


The physical role of the $\R_\pm$ term can be made more explicit by taking advantage of that fact that energy is conserved during the driving process. It follows that the number of photons at the end of the driving process is equal to the average number of photons initially in the system plus (minus) the change in photon number due to the decrease (increase) in the energy of the battery. By energy conservation we can write
\begin{align}\label{eq:ConditionalNumberMeaures}
   \hbar \omega_f \left(n(E_f | \gamma_i^\pm, E_i)+ \frac{1}{2}\right) = \hbar \omega_i \left( n_i^\pm  + \frac{1}{2}\right)  - W  \ \ \ \text{and}  \ \ \ \hbar \omega_i \left(n(E_i | \gamma_f^\pm, E_f) + \frac{1}{2} \right) = \hbar \omega_f \left(n_f^\pm + \frac{1}{2}\right) + W \ \ \ \; 
\end{align}
where $ n_i^\pm$ ($ n_f^\pm$) is the average number of photons in a photon added/subtracted thermal state with frequency $\omega_i$ ($\omega_f$) at temperature $T$. 
Eq.~\ref{eq:ConditionalNumberMeaures} can be rearranged to find the average number of photons measured at the end of the driving processes, 
\begin{align}\label{eq:ConditionalNumberMeauresRearrange}
   \hbar \omega_f n(E_f | \gamma_i^\pm, E_i) = \hbar \omega_i n_i^\pm  - W -  \Delta E_{\mbox{\tiny vac}} \ \ \ \text{and}  \ \ \ \hbar \omega_i n(E_i | \gamma_f^\pm, E_f)  = \hbar \omega_f n_f^\pm  + W  +\Delta E_{\mbox{\tiny vac}}  \ \ \ \; .
\end{align}
Thus, on substituting Eq.~\eqref{eq:ConditionalNumberMeauresRearrange} in Eq.~\eqref{eq:Rplusminus} we find that the prefactor $\R_\pm$ takes the form
\begin{equation}\label{eq:plusminusR1}
    \begin{aligned}
        &\R_\pm(W) = \frac{\omega_f}{\omega_i} \frac{\hbar \omega_f n_f^\pm +  W + x_\pm}{\hbar \omega_i n_i^\pm - W \mp x_\pm} 
\end{aligned}
\end{equation}
with $x_+$ equal to the \textit{change} in vacuum energy, $x_+= \Delta E_{\mbox{\tiny vac}}$, and $x_-$ equal to the sum of the initial and final vacuum energies $x_- = \frac{\hbar \omega_f + \hbar \omega_i}{2}$. As discussed in Section~\ref{sec:PhotonAddSub} of the main text, the average number of photons in a photon added or subtracted state,  $ n_f^\pm$, evaluates to
\begin{equation}
    n_k^+ = 2 \bar{n}_k + 1  \ \ \ \text{and} \ \ \ n_k^- = 2 \bar{n}_k \ 
\end{equation}
where $\bar{n}_k$ is the average number of photons in a thermal state with frequency $\omega_k$ and takes the form
\begin{equation}
    \bar{n}_k := \frac{1}{Z_k} \sum n_k \exp(- 2 \chi_k (n_k + 1/2)) = \frac{1}{\exp(2\chi_k) -1} \; .
\end{equation}
Thus we find that the prefactor $\R_\pm$, Eq.~\eqref{eq:plusminusR1}, can be rewritten in terms of the mean number of photons in a thermal state as 
\begin{equation}
    \begin{aligned}\label{eq:Prefactors2}
        &\R_\pm(W) = \frac{\omega_f}{\omega_i} \frac{\hbar \omega_f(2 \bar{n}_f + k_\pm) +  W + \Delta E_{\mbox{\tiny vac}}}{\hbar \omega_i \left(2 \bar{n}_i + k_\pm^{-1}\right) - W - \Delta E_{\mbox{\tiny vac}}} 
\end{aligned}
\end{equation}
with $k_+ = 1$ and $k_- = \frac{\omega_i}{\omega_f}$. It is worth noting that the prefactor implicitly depends on the free energy of the initial and final Hamiltonians because the term $\hbar \omega_k (\bar{n}_k + \frac{1}{2})$ is the average energy of a photon in a thermal state with frequency $\omega_k$, which is equal to the free energy of the state plus $k_B T$ times the entropy of the state. Thus $\R$ depends on the temperature, the work done during the driving process, as well as the equilibrium free energy and the entropy of a thermal system with respect to the initial and final Hamiltonians. 

\bigskip

\paragraph*{Photon added and subtracted Jarzynski equality.} 

We can derive a Jarzyski-like relation for photon added and subtracted thermal states from Eq.~\eqref{eq:addsubCrooks}, if we further assume that the system and battery dynamics depend only on the change in energy of the battery and not the initial energy of the battery. That is if the following energy translation invariance condition holds
\begin{equation}
   \P(E_j | \gamma_i^\pm, E_k) =  \P((E_j - E_k) + E_l | \gamma_i^\pm, E_l)  \ \ \ \forall \ E_j,\, E_k, E_l\, \; .
\end{equation}
Having made this assumption we can rewrite the photon added and subtracted Crooks relation, Eq.~\eqref{eq:APaddsubCrooks} as,
\begin{equation}
        \frac{\P(w + E_0| \gamma_i^\pm, E_0)}{\P(- w + E_0| \gamma_f^\pm, E_0)}  = \R_\pm(w) \exp\left(\beta (w \mp \Delta E_{\mbox{\footnotesize vac}}- 2\Delta F )\right) \; ,
\end{equation}
which can be rearranged into 
\begin{equation}\label{eq:RearrangedCrooksAddSub}
       \frac{1}{\R_\pm(w)} \exp(-\beta w) \P(w + E_0| \gamma_i^\pm, E_0) p(E_0)  =  \exp\left(\beta(\mp \Delta E_{\mbox{\footnotesize vac}}- 2\Delta F )\right) \P(- w + E_0| \gamma_f^\pm, E_0) p(E_0) \,  \; 
\end{equation}
where $p(E_0)$ is the probability that the battery is prepared with energy $E_0$. We can now define the work probability distributions in the forwards (F) and reverse (R) processes for the photon added (+) and subtracted relations (-) as 
\begin{align}
    \P_F^\pm(W) &:= \sum_{w} \P \left( E_0-w  | \gamma_i^\pm, E_0\right)  p\left(E_0\right) \delta\left(W - w  \right)  \ \ \ \text{and} \\
    \P_R^\pm(W) &:= \sum_{w} \P\left( E_0-w  | \gamma_f^\pm, E_0\right)  p\left(E_0\right) \delta\left(W - w  \right) \ .
\end{align}
It therefore follows from Eq.~\eqref{eq:RearrangedCrooksAddSub} that the photon added and subtracted Crooks equalities can be rewritten in terms of the forwards and reverse work probability distributions instead of battery state transition probabilities, with
\begin{equation}
    \frac{\P_F^\pm(W)}{\P_R^\pm(-W)} = \R_\pm(W) \exp\left(\beta (W \mp \Delta E_{\mbox{\footnotesize vac}}- 2\Delta F )\right) \; .
\end{equation}
Finally, rearranging and taking the expectation of both sides of the above equality we obtain the photon added and subtracted Jarzynski relation
\begin{equation}
        \left\langle  \frac{1}{\R_\pm(W)} \exp(-\beta W) \right\rangle =  \exp \left(-\beta (2 \Delta F \pm \Delta E_{\mbox{\footnotesize vac}} )\right)  \; .
\end{equation}
Thus we can relate the work done on a system which is prepared in a photon added or subtracted thermal state and driven by a change in Hamiltonian to the change in free energy associated with the change in Hamiltonian.

\section{Derivation of Binomial State Properties}\label{ap:Binomial}

This section contains derivations of some mathematical properties of binomial states. In the main text, the mapping $\M$ was introduced, defined as
\begin{equation}
    \M (X) = \frac{\T \left( e^{-\frac{\beta H_B}{2} } X e^{-\frac{\beta H_B}{2} } \right) }{\Tr(e^{-\beta H} X)}.
\end{equation}
This mapping, sans the time-reversal, is often referred to as a Gibbs rescaling, and it has many interesting properties~\cite{aberg,erick}. Under the Gibbs rescaling, we find the binomial states transform as follows.

\begin{propos}
Let $|n,p\>$ be a binomial state as defined in the main text, for $n \in \mathbb{N}$ and $0 \leq p \leq 1$. For a harmonic Hamiltonian $H_B = \hbar \omega (a^\dagger a + \frac{1}{2})$, the Gibbs re-scaled state $|n,\tilde{p} \> \< n,\tilde{p}| = \Gamma_{H_B}  (|n,p\> \< n,p|) $ is also a binomial state with probability distribution
\begin{equation}\label{eq:rescaled bin}
    \tilde{p} := \frac{e^{-\beta \hbar \omega}p}{p e^{-\beta \hbar \omega} + q}, \ \tilde{q} := \frac{q}{p e^{-\beta \hbar \omega} + q},
\end{equation}
where $q = 1-p$.
\end{propos}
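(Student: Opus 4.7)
The plan is to exploit the fact that $\mathcal{M}$ acts very simply on pure-state projectors: since $e^{-\beta H_B/2}$ is diagonal in the Fock basis, the numerator $\mathcal{T}(e^{-\beta H_B/2}\ket{n,p}\bra{n,p}e^{-\beta H_B/2})$ is still rank one, so the output is a pure state whose ket I can read off by computing $e^{-\beta H_B/2}\ket{n,p}$, renormalising, and then dealing with the transpose separately at the very end.

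First I will apply $e^{-\beta H_B/2}$ term by term to the Fock expansion of $\ket{n,p}$. Each $\ket{k}$ picks up a factor $e^{-\beta \hbar \omega (k+1/2)/2}$, and pulling the $k$-independent $e^{-\beta \hbar \omega/4}$ out front leaves an unnormalised ket whose amplitude on $\ket{k}$ is proportional to $\sqrt{\binom{n}{k}\, p^{k} q^{n-k} e^{-\beta \hbar \omega k}}\, e^{i\phi_k}$. The key manipulation is then purely algebraic: I rewrite $p^{k} e^{-\beta \hbar \omega k} = (p e^{-\beta \hbar \omega})^{k}$, so the moduli-squared of the amplitudes form an unnormalised binomial distribution $\binom{n}{k}(p e^{-\beta \hbar \omega})^{k} q^{n-k}$, whose sum over $k$ is, by the binomial theorem, $(p e^{-\beta \hbar \omega} + q)^{n}$. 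Dividing by the square root of this normalisation, the coefficient of $\ket{k}$ becomes precisely $\sqrt{\binom{n}{k}\, \tilde{p}^{k} \tilde{q}^{n-k}}\, e^{i\phi_k}$ with $\tilde{p}$, $\tilde{q}$ as in Eq.~\eqref{eq:rescaled bin}. As a consistency check, the Gibbs denominator $\Tr(e^{-\beta H_B} \ket{n,p}\bra{n,p}) = \lVert e^{-\beta H_B/2}\ket{n,p}\rVert^{2}$ coincides (up to the global $e^{-\beta \hbar \omega/2}$ that cancels) with the normalisation I extracted, so the same $(p e^{-\beta \hbar \omega}+q)^n$ controls both numerator and denominator of $\mathcal{M}$.

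Finally, I have to apply the transpose $\mathcal{T}$. In the energy eigenbasis this acts by complex conjugation of amplitudes of a pure state, sending each phase $e^{i\phi_k} \mapsto e^{-i\phi_k}$. Since the phases $\{\phi_k\}$ enter the definition \eqref{eq: binomial state} of a binomial state as free parameters, the result is still a state of the form $\ket{n,\tilde{p}}$ (with phases relabelled $\phi_k \to -\phi_k$), which establishes the proposition. I do not anticipate any serious obstacle: the nontrivial content is the single identity $p^{k} e^{-\beta \hbar \omega k} = (p e^{-\beta \hbar \omega})^{k}$, which converts a Gibbs-reweighted binomial into a plain binomial in the new parameter, and the only subtlety worth flagging explicitly is that the transpose does not spoil the binomial form because of the phase freedom in the definition of $\ket{n,p}$.
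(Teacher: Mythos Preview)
Your proposal is correct and follows essentially the same route as the paper: apply $e^{-\beta H_B/2}$ term by term in the Fock expansion, recognise the reweighted amplitudes as an unnormalised binomial in the parameter $pe^{-\beta\hbar\omega}$, and use the binomial theorem to extract the normalisation $(pe^{-\beta\hbar\omega}+q)^n$, yielding $\tilde p,\tilde q$. The only difference is that the paper simply drops the phases $\phi_k$ at the outset ``with no loss of generality'' and never mentions the transpose, whereas you carry the phases through and then argue explicitly that $\mathcal{T}$ merely flips $\phi_k\to-\phi_k$, which still lies within the binomial-state family; your treatment is slightly more complete on this point but otherwise identical.
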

\begin{proof}
Since the Gibbs re-scaling maps pure states to pure states, we need only consider the action of $\mathcal{Z}_{n,p}^{-1/2} e^{-\beta H_B /2} |n,p\> = |\psi\>$ where $\mathcal{Z}_{n,p}^{-1/2}$ is the normalising factor. As the phases are arbitrary,  we neglect them with no loss of generality. Before proceeding, we make the substitution $\chi = \frac{\beta \hbar \omega}{2}$ and $q = 1-p$.
Using the definition of $|n,p\>$, we find
\begin{align}
|\psi\>  &= \frac{1}{\sqrt{\mathcal{Z}_{n,p}}}\sum_{k=0}^n \sqrt{{n \choose k} p^{k} q^{n-k}} e^{ - \chi (a^\dagger a + \frac{1}{2})} |k\>
\\
  &= \frac{1}{\sqrt{\mathcal{Z}_{n,p}}}\sum_{k=0}^n \sqrt{{n \choose k} p^{k} q^{n-k}} e^{ -\chi ( k + \frac{1}{2})} |k\>. \label{eq:rescaling bin}
\end{align}
Let us calculate the normalisation factor
\begin{align}
\mathcal{Z}_{n,p} &= \< n,p | e^{-\beta H_S } |n,p\> \\
             &= \sum_{k = 0}^n \frac{n!}{k!(n-k)!} p^{k} q^{n-k} e^{-2 \chi (k + \frac{1}{2})} \\
             &=  e^{-\chi} (p e^{-2\chi} + q)^n
\end{align}
where to obtain the last line we used the binomial expansion theorem. Inserting this into (\ref{eq:rescaling bin}), we obtain
\begin{align}
   |\psi \>  &= \frac{e^{- \chi/2}}{e^{-\chi/2}(p e^{-2\chi} + q)^{n/2}} \sum_{k=0}^n \sqrt{ {n \choose k} p^k q^{n-k} } e^{-k\chi } |k\> \\
   &=  \sum_{k=0}^n \sqrt{ {n \choose k} \frac{p^k q^{n-k}}{(p e^{-2\chi} + q)^n} e^{-2k\chi }}  |k\> \\
   &= \sum_{k=0}^n \sqrt{ {n \choose k} \left[\frac{p e^{-2\chi}}{p e^{-2\chi} + q} \right]^k \left[\frac{q }{p e^{-2\chi} + q} \right]^{n-k} }  |k\> \\
   &= \sum_{k=0} \sqrt{ {n \choose k} \tilde{p}^k \tilde{q}^{n-k}  } |k\>,
\end{align}
where 
\begin{equation}
    \tilde{p} := \frac{e^{-\beta \hbar \omega}p}{p e^{-\beta \hbar \omega} + q}, \ \tilde{q} := \frac{q}{p e^{-\beta \hbar \omega} + q}.
\end{equation}
It is easily verified that $\tilde{p} + \tilde{q} =1$ and therefore $|\psi \> = |n, \tilde{p}\>$ is a binomial state as claimed. \\
\end{proof}

Binomial state statistics are preserved under a Gibbs re-scaling but in general $\tilde{p}$ decreases with increasing $\chi$, as can be seen if we instead look at $\tilde{q}$. In the limit $\chi \to 0$, $\tilde{q} \to q$ and hence $\tilde{p} \to p$, while in the limit $\chi \to \infty$, $\tilde{q} \to 1$ and conversely $\tilde{p} \to 0$. It smoothly varies between these two limits, implying $\tilde{q} \geq q$.
\\
To derive the quantum distortion factors, we need to know the expectation value in energy for a system prepared in a binomial state.

\begin{propos}\label{prop: Energy Bin}
Suppose $B$ has a harmonic Hamiltonian $H_B := \hbar \omega (a^\dagger a + \frac{1}{2}) $, then the expectation value of energy for a state $|n,p\>$ is 
\begin{equation}
\< H_B \>_{n,p} = \hbar \omega \left(n p + \frac{1}{2} \right).
\end{equation}
\end{propos}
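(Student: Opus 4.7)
The plan is to compute the expectation value directly by expanding $|n,p\rangle$ in the number basis and exploiting the fact that $a^\dagger a$ is diagonal there, reducing everything to a standard moment of a binomial distribution. Since the phases $e^{i\phi_k}$ appearing in the definition of $|n,p\rangle$ from Eq.~\eqref{eq: binomial state} drop out when computing expectation values of operators diagonal in $\{|k\rangle\}$, they can be ignored throughout.

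First I would substitute the definition of $|n,p\rangle$ into $\langle H_B\rangle_{n,p}$ and use $a^\dagger a |k\rangle = k |k\rangle$ together with orthonormality of the number basis. This yields
\begin{equation}
\langle H_B\rangle_{n,p} = \hbar\omega \sum_{k=0}^n \binom{n}{k} p^k (1-p)^{n-k}\left(k + \tfrac{1}{2}\right),
\end{equation}
which I would split into two sums. The second sum is just the normalisation of the binomial distribution and contributes $\hbar\omega/2$ on its own.

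The first sum is the mean of a $\mathrm{Bin}(n,p)$ distribution. I would evaluate it in the standard way, either by recognising it directly or by the quick index-shift trick $k\binom{n}{k} = n\binom{n-1}{k-1}$, followed by relabelling $j = k-1$ and using the binomial theorem $\sum_{j=0}^{n-1}\binom{n-1}{j}p^j(1-p)^{n-1-j} = 1$. This gives $\sum_k k\binom{n}{k}p^k(1-p)^{n-k} = np$. Combining the two pieces produces $\langle H_B\rangle_{n,p} = \hbar\omega(np + 1/2)$, as claimed.

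There is essentially no obstacle here; the computation is elementary and self-contained. The only minor point worth flagging explicitly in the write-up is that the phases $e^{i\phi_k}$ do not affect the result because $H_B$ is diagonal in the number basis, so the proof is insensitive to the particular phase convention chosen in Eq.~\eqref{eq: binomial state}.
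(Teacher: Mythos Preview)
Your proposal is correct and follows essentially the same approach as the paper: both expand in the number basis, split into the $k$-term and the constant term, evaluate the first via the index shift $k\binom{n}{k}=n\binom{n-1}{k-1}$ plus the binomial theorem, and the second by normalisation. Your explicit remark that the phases $e^{i\phi_k}$ drop out because $H_B$ is diagonal in the number basis is a small addition not spelled out in the paper but is certainly worth including.
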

\begin{proof}
We begin by assuming a harmonic Hamiltonian $H_B = \hbar \omega (a^\dagger a + \frac{1}{2})$. Using the definition of $|n,p\>$ leads to
\begin{align}
\< H_B \>_{n,p} &=  \sum_{k = 0}^n \frac{n!}{k!(n-k)!} p^{k} q^{n-k} \hbar \omega  \left(k + \frac{1}{2} \right)    
\end{align}
We now proceed to calculate the two components separately, for the first we have
\begin{align}
    \text{first term} &= \hbar \omega n p \sum_{k=0 }^n k  \frac{(n - 1)! }{k!(n-k)!} p^{k -1} q^{n-k}  \\
    &= \hbar \omega n p \sum_{k=1 }^n   \frac{(n - 1)! }{(k-1)!([n-1]-[k-1])!} p^{k -1} q^{[n-1]-[k-1]} \\
    &= \hbar \omega n p \sum_{j=0 }^m   \frac{m! }{j!(m-j)!} p^{j} q^{m-j} \\
    &= \hbar \omega n p (p + q)^m \\
    &= \hbar \omega np
\end{align}
where we made the substitutions $m =n-1$ and $j = k-1$. Doing a similar calculation for the second term,
\begin{align}
    \text{second term} &= \frac{\hbar \omega  }{2} \sum_{k=0 }^n   \frac{n! }{k!(n-k)!} p^{k } q^{n-k}  \\
    &= \frac{\hbar \omega}{2} (p + q)^n \\
    &= \frac{\hbar \omega}{2}. 
\end{align}
Combining these two equations gives the claimed result.
\end{proof}

The final property we need is the effective potential evaluated for an arbitrary binomial state

\begin{propos}\label{prop: Bin effective potential}
For a binomial state $|n,p\>$ and harmonic Hamiltonian $H_B = \hbar \omega (a^\dagger a + \frac{1}{2})$,
\begin{equation}\label{eq: Bin effective}
    \beta \tilde{E}(\beta,H_B ,|n,p\>) = \frac{\beta \hbar \omega}{2} - n  \ln( p e^{-\beta \hbar \omega} + q )
\end{equation}
where $q = 1-p$. 
\end{propos}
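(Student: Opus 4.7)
The plan is to evaluate $\tilde{E}(\beta, H_B, |n,p\rangle\langle n,p|)$ directly from its definition,
\[
\tilde{E}(\beta, H, X) = -\tfrac{1}{\beta}\ln \Tr\!\left[e^{-\beta H} X\right].
\]
Because the operator $X = |n,p\rangle\langle n,p|$ is a rank-one projector, the trace collapses to the expectation value $\langle n,p| e^{-\beta H_B} |n,p\rangle$. Crucially, this is exactly the normalisation factor $\mathcal{Z}_{n,p}$ that was already computed en route to the previous proposition, so one could simply cite that result; for completeness I would redo it explicitly.

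The steps, in order: first, expand $|n,p\rangle$ in the Fock basis via its definition in Eq.~\eqref{eq: binomial state}. Second, note that $H_B$ is diagonal in the Fock basis with eigenvalues $\hbar\omega(k+\tfrac{1}{2})$, so $e^{-\beta H_B}$ is diagonal too; the orthogonality of the $|k\rangle$ kills all cross terms and eliminates the arbitrary phases $e^{i\phi_k}$, leaving
\[
\langle n,p| e^{-\beta H_B}|n,p\rangle
= e^{-\beta \hbar\omega/2}\sum_{k=0}^{n}\binom{n}{k}\bigl(p\,e^{-\beta\hbar\omega}\bigr)^{k} q^{\,n-k}.
\]
Third, apply the binomial theorem to the remaining sum to collapse it to $(pe^{-\beta\hbar\omega}+q)^n$. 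Fourth, take $-\ln(\cdot)$ of the whole expression, which turns the exponential prefactor into the linear term $\tfrac{\beta\hbar\omega}{2}$ and pulls the exponent $n$ in front of the logarithm, yielding Eq.~\eqref{eq: Bin effective}.

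There is no real obstacle here; the argument is a direct application of the definition of $\tilde{E}$ combined with the binomial theorem, and the bulk of the calculation has in effect already been performed when establishing $\mathcal{Z}_{n,p}$ in the proof of the Gibbs rescaling proposition. The only thing worth flagging is that the phases $\phi_k$ in the definition of $|n,p\rangle$ play no role, since $e^{-\beta H_B}$ commutes with the Fock-basis phase structure, which is consistent with the earlier remark that $\tilde E$ is invariant under unitaries commuting with $H_B$.
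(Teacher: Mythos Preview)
Your proposal is correct and follows essentially the same route as the paper: expand $\langle n,p|e^{-\beta H_B}|n,p\rangle$ in the Fock basis, factor out $e^{-\beta\hbar\omega/2}$, apply the binomial theorem to obtain $(pe^{-\beta\hbar\omega}+q)^n$, and take the logarithm. Your observation that this quantity coincides with the normalisation $\mathcal{Z}_{n,p}$ already computed in the Gibbs-rescaling proposition is apt, though the paper simply redoes the short calculation rather than citing it.
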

\begin{proof}
From the definition of $\tilde{E}$ and $|n,p\>$ we find
\begin{align}
    \beta \tilde{E} (\beta,H_B , |n,p\>) &= -  \ln \left( \sum_{k=0}^n e^{-\beta \hbar \omega (k + \frac{1}{2})} {n \choose k} p^k q^{n-k}  \right)  \\
    &= -  \ln \left(e^{-\beta \hbar \omega/2} \sum_{k=0}^n e^{-\beta \hbar \omega k } {n \choose k} p^k q^{n-k}  \right)  \\
    &=  \frac{\beta \hbar \omega}{2} - \ln \left( \sum_{k=0}^n  {n \choose k} \left[p e^{-\beta \hbar \omega} \right]^k q^{n-k}  \right)  \\
    &= \frac{\beta \hbar \omega}{2} -  \ln \left(  p e^{-\beta \hbar \omega} + q \right)^n \\
    &= \frac{\beta \hbar \omega}{2} - n  \ln \left(  p e^{-\beta \hbar \omega} + q \right),
\end{align}
which concludes the proof.
\end{proof}

Equation (\ref{eq: Bin effective}) can also be formulated in terms of $\tilde{p}$ by noting that $pe^{-\beta \hbar \omega}/\tilde{p}  = (p e^{-\beta \hbar \omega} + q) $. It follows that
\begin{align}
    \beta \tilde{E}_B (\beta,H_B,|n,p\>) &= \frac{\beta \hbar \omega}{2} - n \ln (pe^{-\beta \hbar \omega}/\tilde{p} ) \\
    &= \beta \hbar \omega(n + \frac{1}{2}) + n \ln (\tilde{p}/p), 
\end{align}
on the condition that $p,\tilde{p} > 0$.

\subsection{The quantum distortion factor for Binomial States}

In the main text we discussed a quantum distortion factor $q(\chi)$ that determines how the quantum fluctuation theorem diverges compared to the standard notion of average change in energy of the forwards and reverse processes. Here we derive the explicit formulae for $q(\chi)$. 

We defined two distinct processes when restricting to binomial state preparation and measurement, corresponding to the \textit{resizing} and \textit{re-aligning} regimes. 
In the re-aligning regime, using Proposition \ref{prop: Energy Bin}, the energetic \textit{cost to the battery} in each protocol is
\begin{align}
    \Delta E_+^{\mbox{\tiny(align)}} &:= \< H_B \>_{n,\tilde{p}_i} - \< H_B \>_{n,p_f} = \hbar \omega n(\tilde{p}_i - p_f), \\
    \Delta E_-^{\mbox{\tiny(align)}} &:= \< H_B \>_{n,\tilde{p}_f} - \< H_B \>_{n,p_f} = \hbar \omega n(\tilde{p}_f - p_i). 
\end{align}
The quantity $W_q^{\mbox{\tiny(align)}} := (\Delta E_+^{\mbox{\tiny(align)}} - \Delta E_-^{\mbox{\tiny(align)}})/2$ therefore takes the form
\begin{equation}
    W_q^{\mbox{\tiny (align)}} = \frac{\hbar \omega n }{2} \left( [\tilde{p}_i + p_i] - [\tilde{p}_f + p_f]  \right) =  \frac{\hbar \omega}{2} \left( p_i \left[ \frac{e^{-\beta \hbar \omega}}{p_i e^{-\beta \hbar \omega} + q_i} + 1 \right] - p_f  \left[ \frac{e^{-\beta \hbar \omega}}{p_f e^{-\beta \hbar \omega} + q_f} + 1 \right]  \right).
\end{equation}
Whereas for the generalised energy flow we can use Proposition \ref{prop: Bin effective potential}, which depends solely upon the normal un-rescaled states  $\Delta \tilde{W} = E(\beta,H_B,|n,p_i\> ) - E(\beta,H_B,|n,p_f\> ) $. This turns out to be
\begin{equation}
    \Delta \tilde{W}_{\mbox{\tiny align}} =  - n k_B T \ln \left( \frac{p_i e^{-\beta \hbar \omega} + q_i}{p_f e^{-\beta \hbar \omega} + q_f}\right) = - n k_B T \ln \left( \frac{ \tilde{p}_f p_i}{p_f \tilde{p}_i  }   \right),
\end{equation}
where the latter equality holds provided $p_f,p_i \neq 0$.
The quantum distortion factor where we are free to vary $p$ for fixed $n$ thus takes the form
\begin{align}\label{ap eq: q1}
    q_{\mbox{\tiny align }}(\chi) &= \frac{1 }{\chi} \frac{\ln \left( \frac{p_i e^{-2 \chi} + q_i}{p_f e^{-2 \chi} + q_f}\right)}{(\tilde{p}_f - \tilde{p}_i) + (p_f - p_i)}
    \\
    &= \frac{1 }{\chi} \frac{\ln \left( \frac{\tilde{p}_f  }{p_f}\right) - \ln \left( \frac{\tilde{p}_i  }{p_i}\right)}{(\tilde{p}_f - \tilde{p}_i) + (p_f - p_i)}, \ \text{ if } p_i,p_f \neq 0.
\end{align}

On the other hand, one is also free to vary $n$ and keep $p$ fixed as detailed by the resizing regime. We can define the same quantities, which we now label with a new supercript to differentiate the cases.

\begin{align}
    \Delta E_+^{\mbox{\tiny(size)}} &:= \< H_B \>_{n_i,\tilde{p}} - \< H_B \>_{n_f,p} = \hbar \omega (n_i \tilde{p} - n_f p) , \\
    \Delta E_-^{\mbox{\tiny(size)}} &:= \< H_B \>_{n_f,\tilde{p}} - \< H_B \>_{n_i,p} = \hbar \omega (n_f \tilde{p} - n_i p), 
\end{align}
which implies
\begin{equation}
    W_q^{\mbox{\tiny(size)}} = \frac{\hbar \omega}{2} \left( n_i - n_f \right)(\tilde{p} + p) = \frac{\hbar \omega p}{2} \left( n_i - n_f \right) \left(\frac{e^{-\beta \hbar \omega}}{p e^{-\beta \hbar \omega} + q} + 1 \right).
\end{equation}
Likewise, the generalised energy flow for this process is given by
\begin{equation}
    \Delta \tilde{W}_{\mbox{\tiny size}} = - (n_i-n_f)k_B T  \ln (p e^{-\beta \hbar \omega} + q) =  (n_i-n_f)\left\lbrace k_B T \ln (\tilde{p}/p) + \beta \hbar \omega \right\rbrace.
\end{equation}
The quantum distortion factor for the second regime is thus
\begin{align}\label{ap eq:q2}
    q_{\mbox{\tiny size}} (\chi)  &= \frac{1}{\chi} \frac{\ln (p e^{-2 \chi} + q)}{\tilde{p} + p}
    \\
    &= \frac{1}{\chi} \frac{\ln(\tilde{p}/p) + 2 \chi}{\tilde{p} + p}, \ \text{ if } p \neq 0.
\end{align}

\subsection{The Harmonic Limit}

In this section we prove that there exists a limit in which binomial states become coherent states with arbitrary precision. In what follows, we assume that $np = \lambda$ for some constant $\lambda \in \mathbb{R}$. The correct limit involves making the binomial states a superposition over infinitely many energy eigenstates by taking $n \to \infty$ and correspondingly $p \to 0$.

Firstly, let us consider the effect on the expectation value for energy. We have that
\begin{align}
    \lim_{\substack{n \to \infty \\ np = \lambda}}\< H_B \>_{n,p} &= \lim_{\substack{n \to \infty \\ np = \lambda}} \hbar \omega (np + \frac{1}{2}) \\
    &= \hbar \omega (\lambda + \frac{1}{2})
\end{align}
which we note bears a likeness to the expectation value of energy for a coherent state $|\alpha\>$ where $|\alpha|^2 = \lambda$. Likewise, the effective potential also attains an identical form to that of a coherent state $\tilde{E} (\beta,H_B,|\alpha\>)$ where we once again choose $|\alpha|^2 = \lambda$.

\begin{align}
    \lim_{\substack{n \to \infty \\ np = \lambda}} \beta \tilde{E}_B (\beta,|n,p\>) &= \lim_{\substack{n \to \infty \\ np = \lambda}} \left( \frac{\beta \hbar \omega}{2} - n  \ln \left(1 + \frac{\lambda}{n}[e^{-\beta \hbar \omega} -1]\right) \right) \\
    &= \lim_{\substack{n \to \infty \\ np = \lambda}} \left( \frac{\beta \hbar \omega}{2} - n  \left[ \frac{\lambda}{n} [e^{-\beta \hbar \omega} -1] + \O \left( \frac{1}{n^2} \right) \right] \right) \\
    &= \frac{\beta \hbar \omega}{2} +  \lambda(1-e^{-\beta \hbar \omega}).
\end{align}
 For our purposes, these two quantities being identical to their coherent state counterparts means that the fluctuation theorem in the appropriate limit is indistinguishable from a coherent state fluctuation theorem. However, it is also the case that the states themselves become identical. This is easily verified by using the closely related characteristic functions~\cite{lukacscharacteristic}. Since characteristic functions $\varphi (t)$ uniquely specify a probability distribution, showing equality for all $t$ translates to equality in distribution. Defining the characteristic function $\varphi_\psi (t) := \< \psi | e^{iH_B t} |\psi\>$ we have
\begin{align}
    \varphi_\alpha (t) &= e^{|\alpha|^2 ( e^{i \hbar \omega t} - 1) + i\frac{\hbar \omega }{2} t} \\
    \varphi_{n,p} (t) &= e^{ i\frac{\hbar \omega }{2} t} (1 + p[ e^{i \hbar \omega t} - 1] )^n.
\end{align}
Making the substitution $p = \lambda / n$ we find
\begin{align}
    \varphi_{n,p} (t) &= e^{ i\frac{\hbar \omega }{2} t} (1 + \frac{\lambda}{n}[ e^{i \hbar \omega t} - 1] )^n
\end{align}
However in the limit we have that $\lim_{n \to \infty} (1 + \frac{x}{n})^n = e^{x}$ and therefore
\begin{align}
    \lim_{\substack{n \to \infty \\ np = \lambda}}\varphi_{n,p} (t) &= e^{\lambda ( e^{i \hbar \omega t} - 1) + i\frac{\hbar \omega }{2} t} .
\end{align}
If these are equal for all values of $t$ we deduce that up to arbitrary phases,
\begin{equation}
    \lim_{\substack{n \to \infty \\ np = \lambda}} |n,p\> = |\sqrt{\lambda} \>
\end{equation}
where $|\sqrt{\lambda}\>$ is a coherent state.

These results are enough to prove convergence of the binomial state fluctuation relation to the coherent state fluctuation relation. The quantum distortion factors can also be obtained by perturbative means or by using the relevant quantities in the coherent state limit.







\end{document}